\pgfplotsset{compat=newest}
 \newcommand{\LPH}{{\textit{TPhL}}} 
\let\emptyset\varnothing
\definecolor{mygray}{gray}{0.6}
\def\plstate{\textsf{\small state\_phenomenon}}
\def\plstart{\textsf{\small start}}
\def\plend{\textsf{\small end}}
\def\plbefore{\textsf{\small before}}
\def\plmeets{\textsf{\small meets}}
\def\ploverlaps{\textsf{\small overlaps}}
\def\plfinishes{\textsf{\small finishes}}
\def\plcontains{\textsf{\small contains}}
\def\plstarts{\textsf{\small starts}}
\def\plequals{\textsf{\small equals}}
\def\plfilter{\textsf{\small filter}}
\newcommand{\mt}[1]{{\mathit{#1}}}
\DeclareMathSymbol{\mlq}{\mathord}{operators}{``}
\DeclareMathSymbol{\mrq}{\mathord}{operators}{`'}
\newenvironment{mysplit}%
  {\arraycolsep 0pt \begin{array}{l}}%
  {\end{array}}
\newcommand*\bigcdot{\mathpalette\bigcdot@{.5}}
\newcommand*\bigcdot@[2]{\mathbin{\vcenter{\hbox{\scalebox{#2}{$\m@th#1\bullet$}}}}}
\newtheorem{theorem}{Theorem}
\DeclareMathAlphabet{\mathcal}{OMS}{cmsy}{m}{n}
\title{Handling of Past and Future with Phenesthe+}
\author{Manolis Pitsikalis \and Alexei Lisitsa \and Patrick Totzke\\
\institute{Department of Computer Science, University of Liverpool, United Kingdom}\\
\email{\{e.pitsikalis, a.lisitsa, totzke\}@liverpool.ac.uk}}
\newcommand{\authorrunning}{M.~Pitsikalis, A.~Lisitsa and P.~Totzke}
\newcommand{\titlerunning}{Handling of Past and Future with Phenesthe+}
\begin{document}

\maketitle
\begin{abstract}
    Writing temporal logic formulae for properties that combine instantaneous events with overlapping temporal phenomena of some duration is difficult in classical temporal logics. To address this issue, in previous work we introduced a new temporal logic with intuitive temporal modalities specifically tailored for the representation of both instantaneous and durative phenomena. We also provided an implementation of a complex event processing system, Phenesthe, based on this logic,  that has been applied and tested on a real maritime surveillance scenario. 
    
    In this work, we extend our temporal logic with two extra modalities to increase its expressive power for handling future formulae. We compare the expressive power of different fragments of our logic with Linear Temporal Logic and dyadic first-order logic. Furthermore, we define correctness criteria for stream processors that use our language. Last but not least, we evaluate  empirically the performance of Phenesthe+, our extended implementation, and show that the increased expressive power does not affect efficiency significantly.
\end{abstract}
\section{Introduction}
Temporal logics are widely used in many domains as they allow the formalisation of time dependent properties. For example in philosophy temporal logics can be used to reason about issues involving the temporal domain~\cite{OHRSTROM2006447}, in computer science and specifically, in monitoring, temporal logics are used to specify and monitor specific properties of a system~\cite{Mascle_2020,Bauer_2011}, in complex event processing or recognition~\cite{Cugola_Margara_2010,Anicic_2010,Beck_2018}---which is the focus of this work---temporal logics are used for specifying temporal phenomena and detecting them in streams of information. Naturally, each temporal logic comes with its own focus and limitations. 

The starting point for many monitoring systems is Linear Temporal Logic (LTL)~\cite{Pnueli_1977}, where formulae are interpreted over single event sequences. This makes it difficult to incorporate concurrent activities such as, for instance, those carried out by a chef following some recipe to create a meal (see Figure~\ref{fig:foodex}). They may prepare multiples dishes of the same course in parallel, however the preparation of each dish happens on different overlapping intervals. This is difficult to formalise with logics that talk about single traces of events. 
Temporal logics that allow the representation of concurrent activities directly are those of Halpern and Shoham (HS)~\cite{halpern1991} and Allen's Algebra~\cite{Allen_1983}. Both logics are interpreted over sets of (possibly overlapping, discrete) time intervals, and model instantaneous events via point intervals ([t,t]).

\begin{figure}
\centering
\includegraphics[width=0.6\textwidth]{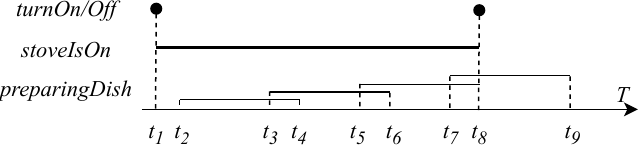}
\caption{Example of instantaneous and durative temporal phenomena. $\mt{turnOn/Off}$ is instantaneous and true when the stove is turned on/off, $\mt{stoveIsOn}$ is durative and holds when the stove is on, and finally $\mt{preparingDish}$ is also durative and holds when a dish is being prepared.}
\label{fig:foodex}
\end{figure}

In previous work~\cite{pitsikalis22a}, we introduced a temporal logic
similar to those of  Halpern/Shoham and Allen, which is interpreted over separate (sets of) time intervals and instantaneous events, which was specifically designed for a maritime application domain. It allows to easily specify concurrent activities and related start and endpoints.
Deliberately absent in our logics (and related prior ones) are explicit negations/complements of formulae that hold on overlapping intervals. 
We implemented an event processing system\footnote{\url{https://manospits.github.io/Phenesthe/}} and evaluated it on a real maritime monitoring scenario~\cite{pitsikalis22b}---whereby maritime experts authored maritime phenomena of interest---proving that our system is capable of producing phenomena detections in real time. 

In this work, we extend our language with two extra temporal modalities: minimal range ($\looparrowright$), and interval filtering ($\plfilter)$. The minimal range operator allows to capture durative temporal phenomena that start at the latest occurrence of a starting condition before the stopping condition, for example, the last working period of machine since someone operated it until it broke down. Filtering is very important for specifying
the duration constraints
of a temporal phenomenon, e.g., a steak is cooked rare if its on a hot pan for approximately 90 seconds on each side. Both examples are impossible to formalise in the original version of our temporal logic. It is evident that $\looparrowright$ has similar semantics to the `until' operator of LTL. However, formulae that utilise `until' are true on instants of time, while formulae that utilise `$\looparrowright$' are true on intervals. While in terms of expressive power the two operators are similar, $\looparrowright$, in practice, allows efficient computations and more concise formulae. Similarly, in the case of filtering, writing an LTL formula for filtering periods based on some fixed threshold is possible, however the formula is not trivial and its length depends the threshold. Comparing LTL and our temporal logic, we show that the fragment of our original language that is comparable with LTL was at most as expressive as the pure past fragment of LTL, while the same fragment but with the addition of the `$\looparrowright$' and `$\plfilter$' operators has equal expressive power to LTL. Concerning our full temporal logic, a comparison with LTL is impossible since the structures on which their respective semantics are based, are incomparable, however we show that our language is expressible in Dyadic First Order logic (DFO) and is strictly less expressive. As expected, including temporal modalities that involve the future requires additional steps for complex event processing. In order to guarantee that our extended implementation, Phenesthe+, is correct, inspired from runtime monitoring and verification~\cite{Aceto2019,Bartocci2018},  we define criteria for \textit{proper} stream processors of our language and discuss  how Phenesthe+ conforms to them. Finally, we illustrate through experimental evaluation that the efficiency of Phenesthe+, is not significantly compromised. Therefore the contributions of this paper are:
\begin{itemize}
    \item We extend the expressiveness of our language for representing ``look ahead happenings'',
    \item We  formally study the expressive power of the temporal logic introduced in~\cite{pitsikalis22a} and its extension,
    \item We define criteria for \textit{proper} stream processors utilising our temporal logic,
    \item We showcase that our stream processing engine is capable of performing real-time complex event processing by adopting a maritime surveillance use-case.
\end{itemize}
The paper is organised as follows. First in Section~\ref{sec:language} we describe our temporal logic. Next, in Section~\ref{sec:maritime} we illustrate through examples inspired by the maritime domain the usage of the new temporal modalities. In Section~\ref{sec:expressiveness} we study the expressive power of our temporal logic. Then, in Section \ref{sec:stream} we describe the requirements a stream processor should satisfy for processing formulae of our language, while in Section~\ref{sec:evaluation} we empirically  evaluate Phenesthe+. Finally, in Section~\ref{sec:relateddisc} we present work related to ours, summarise, and discuss further directions.
\section{The Language of Phenesthe}
\label{sec:language}
The key components of our language are instantaneous events, durative disjoint states and durative, possibly non disjoint, dynamic temporal phenomena. In what follows, `temporal phenomena' includes all of the three aforementioned categories.

\textbf{Syntax.} Formally, our Temporal Phenomena Definition Language (\LPH) is described by the triplet $\langle \mathcal{P}^{esd},L,\Phi \rangle$, where $\mathcal{P}^{esd}$ is a predicate set defined by the union of the event, state or dynamic temporal phenomenon predicates sets (in symbols $\mathcal{P}^{e/s/d}$ resp.); $L$ is a set defined by the union of the set of the logical connectives $\{\wedge,\vee,\neg,\in\}$, the set of temporal operators, $\{\rightarrowtail, \looparrowright, \sqcup,\sqcap,\setminus, \plfilter_\square\}$ where the `$\square$' symbol may be one of the following symbols ${\{<,\geq,=\}}$, the set of temporal relations $\{\plbefore$, $\plmeets$, $\ploverlaps$, $\plfinishes$, $\plstarts$, $\plequals$, $\plcontains\}$ and finally the set of the $\{\plstart,\plend\}$ operators; $\Phi$ is the set of formulae defined by the union of the formulae sets $\Phi^{\bigcdot}$, $\Phi^-$ and $\Phi^=$. We assume that the set of predicate symbols includes those with atemporal and fixed semantics, such as arithmetic comparison operators etc., however for simplification reasons in what follows we omit their presentation. Formulae of $\Phi^{\bigcdot}$ describe instantaneous temporal phenomena, and formulae of $\Phi^-$ describe durative temporal phenomena that hold (are true) in disjoint maximal intervals, finally formulae of $\Phi^=$ describe durative temporal phenomena that may hold in non-disjoint intervals. Figure~\ref{fig:foodex} shows an example of an event ($\mt{turnOn/Off}$), a state ($\mt{stoveIsOn}$) and a dynamic temporal phenomenon ($\mt{preparingDish}$).

Therefore given a set of event, state and dynamic phenomena predicates $\mathcal{P}^{esd}$ the formulae of $\LPH$ are defined as follows:
\begin{align*}
    \phi &:= \phi^{\bigcdot}\ |\ \phi^-\ |\ \phi^=\\
    \phi^{\bigcdot} &:= P^e(a_1,...,a_k)\ |\ \neg \phi^{\bigcdot}\ |\ \phi^{\bigcdot}\ [\wedge,\vee]\ \phi^{\bigcdot}\ |\  \plstart(\phi^-)\ |\ \plend(\phi^-)\ |\ \phi^{\bigcdot} \in \phi^- \\
    \phi^- &:=  P^s(a_1,...,a_k)\ | \phi^{\bigcdot}\ [\rightarrowtail, \looparrowright]\  \phi^{\bigcdot}\ |\ \phi^-\ [\sqcup,\sqcap,\setminus]\ \phi^- | \phi^- \plfilter\ n\ (\text{where}\ n \in \mathbb{N}\cup\{\infty\})\\
    \phi^= &:= P^d(a_1,...,a_k)\ |\ [\phi^-,\phi^=]\ [\plmeets,\ploverlaps,\plequals]\ [\phi^-,\phi^=]\\
    &\quad\qquad|\ [\phi^{\bigcdot},\phi^-,\phi^=]\ [\plstarts,\plfinishes]\ [\phi^-,\phi^=]\\
    &\quad\qquad|\ [\phi^{-},\phi^=]\ \plcontains\ [\phi^{\bigcdot},\phi^-,\phi^=]\\
    &\quad\qquad|\ [\phi^{\bigcdot},\phi^-,\phi^=]\ \plbefore\ [\phi^{\bigcdot},\phi^-,\phi^=]
\end{align*}
where $a_1,...,a_k$ correspond to terms denoting atemporal properties.

\textbf{Semantics.} We assume time is discrete and represented by the natural numbers $T=\mathbb{N}$ ordered via the `$<$' relation. In what follows  we assume that for all models discussed in this paper time is represented by $\mathbb{N}$.
For the formulae sets $\Phi^{\bigcdot}, \Phi^-$ and $\Phi^=$ we define the model $\mathcal{M}=\langle T, I, <,V^{\bigcdot}, V^-, V^= \rangle$ where $V^{\bigcdot}:\mathcal{P}^e\rightarrow 2^T$,  $V^-:\mathcal{P}^s\rightarrow 2^I$, $V^=:\mathcal{P}^d\rightarrow 2^I$ are valuations, and $I=\{ [ts,te] :ts<te\ \text{and}\ ts,te \in T\}\cup\{[ts,\infty): ts\in T\}$ is the set of time intervals of $T$. 
Intervals of the form $[ts,\infty)$ denote that a phenomenon started being true at $ts$,  and continues being true forever.
Intervals of the form $[ts,te]$ denote that a phenomenon started being true at $ts$ and stopped being true at $te$. In what follows, we will use the abbreviated version for bounded quantifiers, i.e., $\forall_{>z_1}^{<z_2} x (...)$ denotes $\forall x\ (x>z_1 \wedge x<z_2) \rightarrow (...)$, and $\exists_{>z_1}^{<z_2} x (...)$ denotes $\exists x\ (x>z_1 \wedge x<z_2) \wedge (...)$. 

Given a model $\mathcal{M}$, the validity of a formula $\phi\in\Phi^{\bigcdot}$ at a timepoint $t\in T$ (in symbols $\mathcal{M},t\models\phi$) is determined by the rules below, starting with the boolean connectives.
\begin{itemize}
\item $\mathcal{M},t\models P^e(a_1,...,a_n)$  iff $t\in$ $ V^{\bigcdot}(P^e(a_1,...,a_n))$.
\item $\mathcal{M},t\models \neg \phi$ iff  $\mathcal{M},t\not\models\phi$.
\item $\mathcal{M},t\models \phi [\wedge,\vee] \psi$  iff  $\mathcal{M},t\models \phi\ $[and, or]$ \ \mathcal{M},t\models \psi$.
\end{itemize}
Next we define the semantics for $\plstart, \plend$ and $\in$ which allow interaction between formulae of $\Phi^{\bigcdot}$ and $\Phi^-$ via the starting, ending, and intermediate points of intervals at which $\Phi^-$ formulae hold.
\begin{itemize}
\item $\mathcal{M},t\models \plstart(\phi)$  iff  $\exists te.\ \mathcal{M},[t,te] \models \phi$ or $\mathcal{M},[t,\infty)\models \phi$, where $\mathcal{M},[t,te] \models \phi$ denotes the validity of a formula $\phi\in\Phi^-$ at an interval $[t,te]$ as defined below.
\item $\mathcal{M},t\models \plend(\phi)$  iff  $\exists ts.\ \mathcal{M},[ts,t] \models \phi$.
\item $\mathcal{M},t\models \phi \in \psi$  iff $\mathcal{M},t\models \phi$ and $\exists^{\leq t} ts.\exists_{\geq t} te.\ \mathcal{M},[ts,te] \models \psi$.
\end{itemize}

 Given a model $\mathcal{M}$, the validity of a formula $\phi\in\Phi^-$ at a time interval $i=[ts,te]\in I$ (in symbols $\mathcal{M},[ts,te]\models\phi$) is defined as follows. We start with the $\looparrowright$ and $\rightarrowtail$ operators which allow specifying minimal or maximal intervals between instants where formulae of $\Phi^{\bigcdot}$ are true.
\begin{itemize}
\item $\mathcal{M},i\models P^s(a_1,...,a_n)$ iff $i\in $  $V^-(P^s(a_1,...,a_n))$.
\item $\mathcal{M},[ts,te]\models \phi \looparrowright \psi$ iff $\mathcal{M},ts\models \phi$ and $\mathcal{M},te\models \psi \wedge \neg \phi$ and $\forall_{>ts}^{<te} t.\bigl[\mathcal{M},t\not\models\phi$ and $\mathcal{M},t\not\models\psi \wedge \neg \phi\bigr]$.
Therefore, $\phi \looparrowright \psi$ holds for the intervals that start at the latest instant $ts$ at which $\phi$ is true and end at first instant $te$ after $ts$ where $\psi\wedge\neg\phi$ is true.
\item $\mathcal{M},[ts,te]\models \phi \rightarrowtail \psi$  iff 
$
         \mathcal{M},ts\models \phi$ and $\mathcal{M},te\models \psi \wedge \neg \phi$
      and  $\forall_{>ts}^{<te} t.\ \mathcal{M},t\not\models\psi \wedge \neg \phi$ and $\forall^{<ts}ts'.$ $\mathcal{M},ts'\models \phi \rightarrow \exists^{<ts}_{>ts'} te'.\ \mathcal{M},te'\models \psi \wedge \neg \phi 
$.

Essentially, $\phi \rightarrowtail \psi$ holds for the disjoint maximal intervals that start at the earliest instant $ts$ where $\phi$ is true and end at the earliest instant $te$  where $\psi$ is true and $\phi$ is false.
\item $\mathcal{M},[ts,\infty) \models \phi \rightarrowtail \psi$ iff $
         \mathcal{M},ts\models \phi$ and $ \forall_{>ts} t.\ \mathcal{M},t\not\models\psi \wedge \neg \phi$ and $\forall^{<ts}ts'.\ \mathcal{M},ts'\models \phi \rightarrow \exists^{<ts}_{>ts'} te'.$ $\mathcal{M},te\models \psi \wedge \neg \phi 
$.

 Therefore a formula  $\phi \rightarrowtail \psi$ may hold indefinitely if there does not exist an  instant after $ts$ at which $\psi\wedge\neg\phi$ is satisfied. For simplification reasons in the semantics below we omit intervals open at the right to infinity since they can be treated in a similar manner.
 \end{itemize}
We continue with the definition of semantics for $\sqcup, \sqcap$ and $\setminus$ which correspond to the usual set operations but for time intervals.
 \begin{itemize}
\item $\mathcal{M},[ts,te]\models \phi\ \sqcup\ \psi$ iff\footnote{A first order definition of the semantics of $\sqcup$ is possible but more lengthy.} 
\begin{itemize}
\item exists a sequence of length $k > 1$ of intervals $i_1,...,i_k \in I$ where $i_k=[ts_k,te_k]$, $ts=ts_1$ and $te=te_k$  such that:
\begin{enumerate}
\item $\forall\alpha\in [1,k-1]$: $te_\alpha\in i_{\alpha+1}$, $ts_\alpha < ts_{\alpha+1}$ and $te_\alpha < te_{\alpha+1}$  ,
\item $\forall\beta\in[1,k]$:  $\mathcal{M},[ts_\beta,te_\beta]\models\phi$ or $\mathcal{M},[ts_\beta,te_\beta]\models\psi$, and
\item $\nexists i_\gamma=[ts_\gamma,te_\gamma] \in I-\{i_1,...,i_k\}$ where  $\mathcal{M},[ts_\gamma,te_\gamma]\models\phi$ or $\mathcal{M},[ts_\gamma,te_\gamma]\models\psi$ and $ts_1 \in i_\gamma$ or $te_k \in i_\gamma$
\end{enumerate}
\item or, $\mathcal{M},[ts,te]\models\phi$ or $\mathcal{M},[ts,te]\models\psi$ and $\nexists i_\gamma=[ts_\gamma,te_\gamma] \in I-\{[ts,te]\}$ where $\mathcal{M},[ts_\gamma,te_\gamma]\models\phi$ or $\mathcal{M},[ts_\gamma,te_\gamma]\models\psi$  and $ts \in i_\gamma$ or $te \in  i_\gamma$.
\end{itemize}
For a sequence of intervals, conditions (1-2) ensure that intervals, at which $\phi$ or $\psi$ are valid,  overlap or touch will coalesce, while condition (3) ensures that the resulting interval is maximal. In the case of a single interval, the conditions ensure that at the interval $[ts,te]$  $\phi$ or $\psi$ is valid, and that $[ts,te]$ is maximal. In simple terms, the temporal union $\phi\ \sqcup\ \psi$ holds for the intervals where at least one of $\phi$ or $\psi$ hold. The above definition of temporal union follows the definitions of temporal coalescing presented in~\cite{bohlen1998,Dohr2018}.
\item $\mathcal{M},[ts,te]\models \phi\ \setminus\ \psi$  iff $\exists  [ts',te'] \in I$ where $\mathcal{M},[ts',te']\models \phi$, $[ts,te]\subseteq [ts',te']$ (i.e., $[ts,te]$ subinterval of $[ts',te']$), $\forall [ts_\psi,te_\psi]\in I$ where $\mathcal{M},[ts_\psi,te_\psi]\models \psi$,  $[ts,te]\cap [ts_\psi,te_\psi]=\emptyset$ and finally $[ts,te]$ is maximal. In plain language, the temporal difference of formulae $\phi,\psi$ holds for the maximal subintervals of the intervals at which $\phi$ holds but $\psi$ doesn't hold.  
\item $\mathcal{M},[ts,te] \models \phi\ \sqcap\ \psi$   iff $\exists  [ts_\phi,te_\phi], [ts_\psi, te_\psi] \in I$ where $\mathcal{M},[ts_\phi,te_\phi]\models \phi$, $\mathcal{M},[ts_\psi,te_\psi]\models \psi$ and $\exists [ts,te]\in I$ where $[ts,te] \subseteq [ts_\phi,te_\phi]$, $[ts,te] \subseteq [ts_\psi,te_\psi]$ and $[ts,te]$ is maximal. In other words, the temporal intersection of two formulae of $\Phi^-$ holds for the intervals at which both formulae hold.
\end{itemize}
We finish the semantics for formulae of $\Phi^-$, with the semantics of the $\plfilter$ operator, which allows specifying constraints on the length of intervals at which formulae of $\Phi^-$ hold.
\begin{itemize}
\item $\mathcal{M},[ts,te] \models \phi\ \plfilter_{\{<,\geq,=\}}\ n$  iff $\mathcal{M},[ts,te] \models \phi$ and $te-ts\ {\{<,\geq,=\}}\ n$.
\end{itemize}

Due to space limitations and for this part only we will adopt point intervals to refer to instants. This will allow us to define the semantics for formulae of $\Phi^=$ without specifying different rules for involved sub-formulae of $\Phi^{\bigcdot}$. In other words given a $\phi\in \Phi^{\bigcdot}$ we will denote the satisfaction relation $\mathcal{M},t\models \phi$ as $\mathcal{M},[t,t]\models \phi$. Given a model $\mathcal{M}$, the validity of a formula $\phi\in\Phi^=$ at a time interval $[ts,te]\in I$ (in symbols $\mathcal{M},[ts,te]\models\phi$) is defined as follows:
\begin{itemize}
\item $\mathcal{M},[ts,te]\models P^d(a_1,...,a_n)$ iff $[ts,te]\in V^=(P^d(a_1,...,a_n))$.
\item  $\mathcal{M},[ts,te]\models \phi\ \plbefore\ \psi$   iff 
   $\exists te'.\exists_{>te'} ts'.\bigl[\mathcal{M},[ts,te']\models \phi$ and $\mathcal{M},[ts',te]\models \psi$ and $ \forall ts''. \forall_{>te'}^{<ts'} te''.$ $  \mathcal{M},[ts'',te''] \not\models \phi$ and $\forall_{>te'}^{<ts'} ts''. \forall te''. \mathcal{M},[ts'',te''] \not\models \psi\bigr]$.
In our approach the `before' relation holds only for intervals where the pair of instants or intervals at which the participating formulae are true or hold, are contiguous. For example, for the intervals $[1,2]$, $[1,3]$ and $[5,6]$ only $[1,3]$ is $\plbefore$ $[5,6]$. We chose to limit the intervals satisfying the $\plbefore$ relation, as in practice it is usually the case that the interval directly before another one is required for specifying a dynamic phenomenon. 
\item $\mathcal{M},[ts,te]\models \phi\ \plmeets\ \psi$   iff  $\exists t. \mathcal{M},[ts,t]\models \phi$ and $\mathcal{M},[t,te]\models \psi$.
 \item $\mathcal{M},[ts,te]\models \phi\ \ploverlaps\ \psi$   iff $\exists_{>ts}^{<te} ts'.\exists_{>ts'}^{<te} te'.\bigl[ \mathcal{M},[ts,te']\models \phi$ and $\mathcal{M},[ts',te]\models \psi\bigr]$.
\item $\mathcal{M},[ts,te]\models \phi\ \plfinishes\ \psi$  iff 
 $\exists_{>ts}^{\leq te} ts'.\bigl[\mathcal{M},[ts',te]\models \phi$ and $\mathcal{M},[ts,te]\models \psi\bigr]$.
\item $\mathcal{M},[ts,te] \models\phi\ \plstarts\ \psi$  iff 
  $\exists^{<te}_{\geq ts} te'.\bigl[\mathcal{M},[ts,te']\models \phi$ and $\mathcal{M},[ts,te]\models \psi\bigr]$.
\item $\mathcal{M},[ts,te]\models \phi\ \plequals\ \psi$   iff  $\mathcal{M},[ts,te]\models \phi$ and $\mathcal{M},[ts,te]\models \psi$.
\item $\mathcal{M},[ts,te]\models \phi\ \plcontains\ \psi$ iff $\mathcal{M},[ts,te]\models \phi$ and $\exists_{>ts} ts'.\exists^{<te} te'.$ $ \mathcal{M},[ts',te']\models \psi$.
\end{itemize}

\section{Examples of maritime properties expressed in \LPH}
\label{sec:maritime}
We demonstrate the usability of \LPH\ and the new temporal modalities by adopting a maritime monitoring scenario. When it comes to maritime surveillance there are several resources available; for example the Automatic Identification System (AIS) allows the transmission of timestamped positional and ancillary data from vessels, maritime areas in the form of polygons can be used for producing vessel-area relations and so on. Similar to~\cite{pitsikalis22a}, we assume the input consists of AIS messages along with spatial events relating vessels to areas of interest e.g., port areas, fishing areas and so on. Therefore our task here involves detecting maritime phenomena of interest i.e., the instants, time periods at which they are true over a maritime input stream. Below we formalise some maritime temporal phenomena\footnote{The complete set of definitions is available in our online repository \url{https://github.com/manospits/Phenesthe/tree/future}.} that utilise the new temporal modalities ($\looparrowright$ and $\plfilter$). 

\textbf{Fishing warning.} Illegal fishing is a very important issue. Vessels engaged in illegal fishing typically declare fake ship-types. Consider the formalisation below for detecting suspicious stops in fishing areas.
\begin{align*} 
&\plstate\ \mt{fishing\_warning(V,F)}:\\
&\quad ((\mt{in\_fishing\_area}(V,F) \wedge \neg \mt{vessel\_type(V, fishing)})\  \sqcap\ stopped(V))\ \plfilter_{\geq} 600.
\end{align*}
$\plstate$ is a keyword for declaring the phenomenon type, $\mt{in\_fishing\_area}$ is a user defined state that holds for the time periods a vessel $V$ is within a fishing area $F$, while $\mt{stopped}$ is a state that holds for the time periods a vessel is stopped. Finally, $\mt{vessel\_type}(V,T)$ is an atemporal predicate that is true when vessel $V$ has type $T$. Therefore, a vessel performs a $\mt{fishing\_warning}$, if it is not a fishing vessel, and it is stopped within a fishing area for a period longer that 10 min (600 sec). Here, filtering is used for minimising false detections occurring from AIS errors (e.g., zero speed) or normal activities.

\textbf{Port waiting time}. Monitoring the waiting time of vessels since they entered a port and until they get moored is highly useful for various operational and logistical reasons (e.g., efficient planning of resources). However, some vessels may enter and leave a port without mooring---due to weather conditions for example. We formalise port waiting time below.
\begin{align*} 
&\plstate\ \mt{waiting\_time}(V,P):\\
&\quad \plstart(\mt{in\_port}(V,P)) \looparrowright \plstart(\mt{moored}(V,P)).  
\end{align*}
$\mt{in\_port}$ is a state that holds when a vessel is in a port, while $\mt{moored}$ is a state that holds when a vessel is moored at a port. Note that the left and right arguments of $\looparrowright$ are formulae of $\Phi^{\bigcdot}$, therefore if these formulae were used in other definitions we could have defined corresponding events. Consequently, the $\mt{waiting\_time}$ state holds for the minimal periods between the time a vessel enters a port and the time the vessel starts being moored. Here we are interested in the minimal period, as we want to detect only the cases where a vessel entered a port and got moored.

\section{Expressiveness}
\label{sec:expressiveness}
In this section we study the expressive power of our language. We consider three syntactic fragments of \LPH. The first one, denoted as \LPH$_o^-$, corresponds to the original version of the language (w/o $\looparrowright,\plfilter$) and excluding formulae of $\Phi^=$ (recall that $\Phi^=$ formulae  hold on possibly non-disjoint intervals). The second is \LPH$^-$, which is the same as \LPH$_o^-$  but includes $\looparrowright$ and $\plfilter$, while the third, \LPH\,  corresponds to the complete language. Figure~\ref{fig:exprel} (left) illustrates the syntactic relation between \LPH, \LPH$^-$ and \LPH$^-_o$.  In more detail, we will show that \LPH$_o^-$ is equally expressive as pure past \textit{LTL}, \LPH$^-$ has equal expressive power to \textit{LTL}, and finally \LPH\ is strictly less expressive than \textit{DFO}. The relations in terms of expressive power between the  different language fragments are illustrated in Figure~\ref{fig:exprel} (right). 

\subsection{Preliminaries}
\label{sec:prel}
Before we continue with our analysis, as a reminder we present the syntax of LTL with past, and First Order Monadic Logic of Order (\textit{FOMLO}).
\begin{figure}[t]
    \centering
    \includegraphics[width=0.35\textwidth]{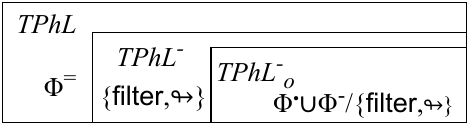}
    \hspace{30pt}
    \begin{tikzpicture}
    \node (lm) at (-2,-1) {\LPH$^-$};
    \node (lom) at (-2,-2) {\LPH$_o^-$};
    \node (dfo) at (2,-1) {\textit{DFO}};
    \node (l) at (2,-2) {\LPH};
    \node (tlp) at (0.0,-2) {\textit{LTL}$[\texttt{Y}\texttt{S}]$};
     \node (tl) at (0.0,-1) {\textit{LTL}$[\texttt{X}\texttt{U},\texttt{Y}\texttt{S}]$};
\draw [black,style=double] (lom) -- (tlp);
\draw [->,black] (tl) -- (tlp);
\draw [->,black] (lm) -- (lom);
\draw [black,style=double] (lm) -- (tl);
\draw [<-,black] (l) -- (dfo);
\end{tikzpicture}
    \caption{Syntactic relation between the fragments of $\LPH$ (left). Expressive relations between different fragments of  \LPH, LTL and \textit{DFO} (right). A fragment $A$ is strictly more expressive from a fragment $B$ if they are connected via $A\rightarrow B$. Double lined edges denote equal expressive power. }
    \label{fig:exprel}
\end{figure}

\textbf{\textit{LTL}.} The formulae of \textit{LTL}$[\texttt{X}\texttt{U},\texttt{Y}\texttt{S}]$, given a set of propositions $\mathcal{P}$ are defined as follows:
\begin{align*}
        \phi ::=\  \perp\ |\ p\ |\ \neg \phi_1\ |\ \phi_1 \wedge \phi_2\ |\ \texttt{X} \phi_1\ |\ \phi_1\ \texttt{U}\ \phi_2\ |\ \texttt{Y} \phi_1\ |\ \phi_1\ \texttt{S}\ \phi_2;
\end{align*}
where $\texttt{X}$, and $\texttt{U}$ stand for the next and until modalities, while $\texttt{Y}$, and $\texttt{S}$ stand for previous and since.  The formulae of \textit{LTL}$[\texttt{X}\texttt{U},\texttt{Y}\texttt{S}]$ are interpreted over a discrete, linear model of time, formally represented as $\mathcal{M}_{TL}=\langle T, <, V^{TL}\rangle$, where $T$ is equal to $\mathbb{N}$, $<$ is the linear order and $V^{TL}:\mathcal{P} \rightarrow 2^{T}$ is the interpretation function, mapping each proposition to a set of time instants. The satisfaction relation, i.e., that a formula $\phi$ is true at $t$, is defined as $\mathcal{M}_{TL},t\models \phi$. The semantics of \textit{LTL} are defined as usual; more specifically in what follows we assume the reflexive\footnote{As we work with discrete linear orders, this choice makes no difference.} semantics of $\texttt{S}$ and $\texttt{U}$. 
We denote the pure past fragment of \textit{LTL} i.e., \textit{LTL} without $\texttt{X}$ and $\texttt{U}$ as \textit{LTL}[$\texttt{YS}$].

\textbf{FOMLO.} Given a countable set of variables ${x,y,z,...}$, the formulae of \textit{FOMLO} over a set of unary predicate symbols $\Sigma$ are defined a follows:
\begin{align*}
    \mathit{atomic}::= x < y\ |\ x = y\ |\ P(x)\ (\text{where}\ P \in \Sigma)\\
    \phi ::= \mathit{atomic}\ |\ \neg \phi_1\ |\ \phi_1 \vee \phi_2\ |\ \phi_1 \wedge \phi_2\ |\ \exists x.\ \phi_1\ |\ \forall x.\ \phi_1
\end{align*}
We interpret \textit{FOMLO} formulae over structures of the form $\mathcal{M}_{FO}\langle T,<,V^{FO}\rangle$, where $T$ is equal to $\mathbb{N}$, `$<$' is the linear order while $V^{FO}: \Sigma \rightarrow 2^{T}$ is the interpretation of $\Sigma$. $\mathcal {M}_{FO},t_1,t_2,\cdots,t_n \models \phi(x_1,x_2, \cdots, x_n)$ denotes the satisfaction of a formula $\phi$ with free variables $x_1,x_2, \cdots, x_n$ when they are interpreted as elements $t_i$ of $\mathcal{M}_{FO}$. The semantics of the formulae are defined as usual (see for example~\cite{Rabinovich_2014}). We also define the \textit{FOMLO}$^-$ fragment of \textit{FOMLO}. Syntactically a formula with one free variable $\phi(x)$, is a formula of the fragment if any bounded variable in the negated normal form of $\phi(x)$ is bounded to be $\leq x$. Semantically, this means that for all models a formula $\phi(x)$ of \textit{FOMLO}$^-$ satisfies:
\begin{align}
\begin{mysplit}
\label{eq:semprop}
\forall t\ \mathcal{M}_{FO},t\  \models \phi(x) \leftrightarrow \mathcal{M}_{FO}[0,t],t\models \phi(x)
\end{mysplit}
\end{align}
where ${M}_{FO}[0,t]$ is a finite model starting from $0$ and ending up to position $t$ inclusive. Intuitively, formulae of \textit{FOMLO}$^-$ can talk only about the past and the present. In what follows, given a set of propositions $\mathcal{P}=\{p_1,...,p_k\}$ and a set of predicate symbols $\Sigma=\{p_1(x), ..., p_k(x)\}$ a FOMLO model $\mathcal{M}_{FO}$ is \textit{faithful} to $\mathcal{M}_{TL}$ iff $\forall_{\geq 1}^{\leq k} i. V^{FO}(p_i(x)) = V^{TL}(p_i)$.

\textbf{DFO.} Finally, on our expressiveness analysis we will also consider \textit{DFO}, which in contrast to \textit{FOMLO}, uses dyadic predicate symbols e.g., $p(x,y)$.

\subsection{\LPH$_o^-$ and Pure Past \textit{LTL}}
In this section we will show that \LPH$_o^-$ is expressively equal to the pure past fragment of \textit{LTL}, i.e., \textit{LTL}$[\texttt{Y}\texttt{S}]$.
The \LPH$_o^-$ fragment is described by the triplet $\langle \mathcal{P}^{es},L^-_o,\Phi \rangle$, where $\mathcal{P}^{es}$ is a set defined by the union of event and state predicate sets ($\mathcal{P}^{e/s}$); $L^-_o$ is a set defined by the union of the set of the logical connectives $\{\wedge,\vee,\neg,\in\}$ and the set of temporal operators $\{\rightarrowtail, \sqcup,\sqcap,\setminus\}$. Formulae \LPH$_o^-$ are evaluated over $\mathcal{M}^-=\langle T, I, <,V^{\bigcdot}, V^-\rangle$ models which are defined in a similar manner to the models presented in Section~\ref{sec:language}.

Given a finite set of event propositions\footnote{In what follows, for simplicity, we will refer to atomic predicates of \LPH\ as propositions.} $\mathcal{P}^e=\{e_1,...,e_k\}$, and a finite set of state propositions $\mathcal{P}^s=\{s_1,...,s_k\}$ we say the \textit{FOMLO}$^-$ model $M_{FO}=\langle T,<,V^{FO}\rangle$ is \textit{faithful} to the model $\mathcal{M}^-=\langle T, I, <,V^{\bigcdot}, V^-\rangle$ of \LPH$_o^-$ if it has the following properties: 
\begin{itemize}
    \item for any proposition $e$ in $\mathcal{P}^e$, $V^{\bigcdot}(e)=V^{FO}(e_t)$, and
    \item for any proposition $s$ in $\mathcal{P}^s$, $V^-(s)=\rho(V^{FO}(s_+),V^{FO}(s_\in),V^{FO}(s_-))$ 
\end{itemize}
where $e_t$ corresponds to the monadic predicate $e_t(x)$ and the triplet ($s_+,s_\in,s_-$) corresponds to the monadic predicates $s_+(x)$, $s_\in(x)$, $s_+(x)$ which are true on instants corresponding to the start, intermediate, and end of an interval respectively at which $s$ is true.  $\rho:2^T\times 2^T\times 2^T\rightarrow 2^I$ is a partial mapping from three set of points to a set of intervals of $I$. Given three sets $S,B$ and $E$, corresponding to starting, intermediate and ending points resp., $\rho$ is defined as follows:
\begin{align*}
\rho(S,B,E)=\
\begin{aligned}
     &  \bigl\{[ts,te]:  ts < te \wedge ts \in S \wedge te \in E \wedge \forall_{>ts}^{<ts}t.  (t \in B \wedge t \not\in S \wedge t \not\in E)\bigr\}\\
     & \cup \bigl\{ [ts,\infty): ts \in S \wedge \nexists_{>ts} te.\ te \in E \wedge \forall_{>ts} t.\ t \in B\bigr\}
\end{aligned}
\end{align*}
For all $i\in I$ where a formula $\phi \in \Phi^-$ is true $\rho$ is bijective (recall that $\phi\in\Phi^-$ formulae always hold on disjoint intervals).  For our expressiveness study, we will use the following theorem.
\begin{theorem}[Gabbay et al. 1980~\cite{Gabbay_1980}]
\label{theorem:gabbay}
For every formula of \textit{FOMLO}$^-$ $\phi(x)$ we can find an \textit{LTL}$[\texttt{Y}\texttt{S}]$ formula $\theta$ such that $\forall t. \mathcal{M}_{TL},t\models \phi \leftrightarrow \mathcal{M}_{FO},t\models \theta$ for all $\mathcal{M}_{FO}$ and their faithful models $\mathcal{M}_{TL}$.
\end{theorem}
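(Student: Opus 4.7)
The plan is to reduce this statement to Kamp's expressive completeness theorem combined with Gabbay's separation property. By Kamp's theorem, every FOMLO formula $\phi(x)$ is equivalent, on any faithful pair $(\mathcal{M}_{FO}, \mathcal{M}_{TL})$, to some formula $\theta'$ of \textit{LTL}$[\texttt{X}\texttt{U},\texttt{Y}\texttt{S}]$. What remains is to eliminate the future operators \texttt{X} and \texttt{U} from $\theta'$, which is possible precisely because $\phi(x) \in$ \textit{FOMLO}$^-$ satisfies property~\eqref{eq:semprop}: its truth at $t$ depends only on the prefix $\mathcal{M}_{FO}[0,t]$.

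The separation property rewrites any \textit{LTL}$[\texttt{X}\texttt{U},\texttt{Y}\texttt{S}]$ formula as a Boolean combination of pure past, pure present, and pure future subformulae. Applied to $\theta'$, this yields a combination whose pure-present and pure-future parts are evaluated at the current instant $t$. Since the \textit{FOMLO}$^-$ hypothesis makes $\phi(x)$ insensitive to the strict future of $t$, each pure-future component must evaluate to a value fully determined by the prefix, and can therefore be replaced by a constant; Boolean simplification then leaves an \textit{LTL}$[\texttt{Y}\texttt{S}]$ formula $\theta$ as required.

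The main obstacle is justifying the elimination of the future components rigorously. Collapsing a future subformula to a constant is not valid for arbitrary models, so the argument needs a model-surgery step: given any $\mathcal{M}_{FO}$ and $t$, construct a companion model that agrees with $\mathcal{M}_{FO}$ on $[0,t]$ but realises both possible truth values of the future subformula, then use property~\eqref{eq:semprop} to conclude that the truth of $\theta'$ at $t$ cannot depend on the suffix. An equally valid route avoiding explicit appeal to separation is a direct induction on the structure of $\phi(x)$, with a strengthened hypothesis over several free variables relative to a distinguished ``current'' one; the base cases for $P(x)$, $y < x$ and $y = x$ are immediate, Boolean connectives translate pointwise, and in the quantifier step the \textit{FOMLO}$^-$ bound $y \leq x$ turns $\exists y.\,\psi(x,y)$ into a \texttt{S}-formula after translating $\psi$ inductively. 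Either route ultimately hinges on the same technical content—controlling what can be said about the past using only past modalities—which is the core contribution of~\cite{Gabbay_1980}.
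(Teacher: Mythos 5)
Your proposal is correct in substance, but it is worth separating its two routes, because only the second coincides with what the paper actually does: the paper's entire proof is the one-line observation that the statement is the mirror image (past instead of future) of Theorem 2.2 of~\cite{Gabbay_1980}, i.e.\ exactly your ``direct induction'' alternative, delegated wholesale to that reference. Your first route --- invoke Kamp's theorem (Theorem~\ref{theorem:kamp}) to get some $\theta'$ in \textit{LTL}$[\texttt{X}\texttt{U},\texttt{Y}\texttt{S}]$, separate it into past/present/future components, and then use property~\eqref{eq:semprop} to discard the future components --- is a genuinely different derivation, self-contained modulo the separation theorem (which is itself also from~\cite{Gabbay_1980}); it buys a transparent reduction at the cost of invoking two heavy black boxes instead of one. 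One detail in that route should be tightened: ``realising both possible truth values of the future subformula'' is not the cleanest surgery. Since $\phi(x)\in\textit{FOMLO}^-$ and $\theta'\equiv\phi(x)$ on all models, $\theta'$ at $t$ is insensitive to the suffix; so replace the suffix beyond $t$ by a fixed canonical one (say all propositions false), on which every pure strict-future subformula takes a truth value independent of $t$ and of the prefix, substitute those constants, and observe that the resulting pure past/present formula $\theta$ agrees with $\theta'$ at $t$ on the original model because past formulae only read the prefix. Also note that~\eqref{eq:semprop} is the semantic property the paper attaches to a syntactic definition of \textit{FOMLO}$^-$, so using it as you do is legitimate; and your second route, while pointing in the right direction, understates the quantifier case --- handling $\exists y\leq x$ with a multi-variable induction hypothesis is precisely the hard core of~\cite{Gabbay_1980}, not an ``immediate'' step, which is presumably why the paper cites it rather than reproving it.
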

\begin{proof}
Dual proof of Theorem 2.2 in~\cite{Gabbay_1980}.
\end{proof}
\begin{theorem} 
\label{theorem:idFOMLO}
For every formula $\phi$ of \LPH$_o^-$, for all \LPH$_o^-$ models $\mathcal{M}^-$ and their  \textit{faithful} \textit{FOMLO}$^-$ models $\mathcal{M}_{FO}$:
\begin{enumerate} 
\item if $\phi \in \Phi^{\bigcdot}$, there exists a formula $\phi(t)$ with one free variable  of \textit{FOMLO}$^-$ such that  $\mathcal{M}^-,t\models\phi$ iff $\mathcal{M}_{FO},t\models\phi(t)$.
\item if $\phi \in \Phi^-$, there exist formulae $\phi^{+/\in/-}(t)$ with one free variable such that $\mathcal{M}^-,[ts,te]\models\phi$ iff  
\begin{align*}
\mathcal{M}_{FO},ts \models \phi^+(ts) \wedge \forall_{>ts}^{<te} t.\ \mathcal{M}_{FO},t \models \phi^\in(t) \wedge \mathcal{M}_{FO},te \models \phi^-(te)
\end{align*}
and, $\mathcal{M}^-,[ts,\infty)\models\phi$ iff $
\mathcal{M}_{FO},ts \models \phi^+(ts) \wedge \forall_{>ts} t.\ \mathcal{M}_{FO},t \models\phi^\in(t)$
\end{enumerate}
\end{theorem}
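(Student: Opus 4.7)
The plan is to proceed by simultaneous structural induction on $\phi$, carrying both clauses in parallel so that the triple $(\psi^+,\psi^\in,\psi^-)$ produced for any $\Phi^-$ subformula is available to the $\Phi^{\bigcdot}$ constructors that consume it, and conversely the single-formula translation for a $\Phi^{\bigcdot}$ subformula is in hand when building the $\Phi^-$ constructors. The guiding idea is the $\rho$-decomposition: since $\Phi^-$ intervals are disjoint and maximal, every interval-level property reduces to a conjunction of point-level predicates asserting ``here a $\phi$-interval begins'', ``here we are strictly inside one'', and ``here one ends''. Because FOMLO$^-$ requires every bound variable to be $\leq x$ (property~\eqref{eq:semprop}), at each step I must exhibit those point-level predicates using only past and present quantification relative to the candidate instant.

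For the $\Phi^{\bigcdot}$ clause the cases are short. An atomic $P^e(\bar a)$ maps to the monadic $e_t(x)$ by faithfulness; Booleans pass through; $\plstart(\phi)$ becomes $\phi^+(x)$ and $\plend(\phi)$ becomes $\phi^-(x)$ from the IH; and $\phi \in \psi$ translates to $\phi(x) \wedge \bigl(\psi^+(x) \vee \psi^\in(x) \vee \psi^-(x)\bigr)$, since the disjointness of $\psi$-intervals makes ``$x$ lies inside some interval'' coincide with ``$x$ is one of the three kinds of points''.

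For the $\Phi^-$ clause, the atomic $P^s(\bar a)$ is settled immediately by the triple $(P^s_+, P^s_\in, P^s_-)$. For $\phi \rightarrowtail \psi$ I transcribe the semantic clauses into three predicates: the start asserts $\phi(x)$ together with the condition that every earlier $t'$ with $\phi(t')$ has some $t'' \in (t',x)$ at which $\psi(t'') \wedge \neg\phi(t'')$; the end asserts $\psi(x) \wedge \neg\phi(x)$ together with the existence of a prior start whose interval has not been closed strictly before $x$; the intermediate uses the same existence but additionally forbids any closing up to and including $x$. All quantifiers are strictly bounded by $x$, and the $[ts,\infty)$ subclause is handled uniformly because at a fixed $x$ whether the interval eventually closes is invisible to the point-level predicates. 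For $\sqcap$ I take conjunctive triples; for $\setminus$ I define the starts as either ``start of $\phi$ outside every $\psi$-interval'' or ``immediately after the end of a $\psi$-interval inside a still-open $\phi$-interval'', with ends dual, all computable from the sub-triples by past look-back.

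The main obstacle I expect is the coalescing union $\sqcup$, whose semantics quantifies over variable-length sequences of intervals with a global maximality condition. My plan is to avoid reifying those sequences and instead to give a local characterisation of the triple: $x$ starts $\phi \sqcup \psi$ iff $x$ starts either $\phi$ or $\psi$ and the immediately preceding instant lies outside every $\phi$- and $\psi$-interval (detectable by the sub-triples applied to $x-1$); $x$ is intermediate iff some sub-predicate of $\phi^{+/\in/-},\psi^{+/\in/-}$ holds at $x$ and there is an earlier start of $\phi \sqcup \psi$ with no intervening ``outside'' gap; the end case is analogous. Equivalence with the sequence-based semantics follows by a short induction on the number of coalesced components. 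In every case, uniqueness of the identified start per interval (by maximality and disjointness) secures the non-trivial direction of the ``iff'', and since all introduced quantifiers stay bounded above by $x$, condition~\eqref{eq:semprop} is preserved and the translation remains in FOMLO$^-$, completing the induction.
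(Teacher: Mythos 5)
Your overall strategy coincides with the paper's: the proof there is exactly a structural translation, defining $\tau^{\bigcdot}$ for $\Phi^{\bigcdot}$ and a triple $\tau^-_{+},\tau^-_{\in},\tau^-_{-}$ of \textit{FOMLO}$^-$ point-predicates for the start, interior and end points of the disjoint maximal intervals of each $\Phi^-$ formula, with all bounds kept at or below the free variable. However, the one case you single out as the main obstacle, $\sqcup$, is handled incorrectly with respect to the defined semantics. Coalescing in the semantics of $\sqcup$ requires $te_\alpha \in i_{\alpha+1}$, i.e.\ consecutive constituent intervals must share a time point (overlap or meet); intervals that are merely adjacent on the discrete line are \emph{not} merged. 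Your local test makes $x$ a start of $\phi \sqcup \psi$ only if the instant $x-1$ lies outside every $\phi$- and $\psi$-interval, which wrongly coalesces adjacent intervals: if $\psi$ holds on $[1,3]$ and $\phi$ on $[4,7]$, then $\phi \sqcup \psi$ holds on $[1,3]$ and on $[4,7]$ separately (the sequence clause fails because $3\notin[4,7]$, and each single interval is maximal), so $4$ is a start point of the union; your criterion rejects it because $3$ is the endpoint of a $\psi$-interval. The correct local condition, as in the paper's $\tau^-_+(\phi\sqcup\psi,x)$, is stated at $x$ itself: $x$ starts $\phi$ or $\psi$ and is not a start, interior or end point of the other operand (with the case that both start at $x$ allowed); no reference to the predecessor instant is needed, and your sketched equivalence proof ``by induction on the number of coalesced components'' would break at exactly this adjacency case.

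A second, smaller slip: ``conjunctive triples'' for $\sqcap$ cannot mean the pointwise conjunctions $\tau^-_+(\phi,x)\wedge\tau^-_+(\psi,x)$ etc., since an intersection interval may start at a point that starts only one operand while lying strictly inside the other (e.g.\ $\phi$ on $[1,10]$ and $\psi$ on $[3,7]$ give a start at $3$), and one must also rule out degenerate single-point overlaps because intervals of $I$ satisfy $ts<te$. Both defects are repairable inside your (and the paper's) scheme by case-splitting on the sub-triples, in the same style as the paper's explicit clauses for $\sqcup$; the remainder of your plan --- the $\Phi^{\bigcdot}$ cases including $\in$, the three predicates for $\rightarrowtail$, the treatment of $\setminus$, the uniform handling of $[ts,\infty)$, and the check that all quantifiers stay bounded by $x$ so that property~\eqref{eq:semprop} is preserved --- matches the paper's argument.
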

\begin{proof}
        The proof is straightforward by direct translations. We define the translation $\tau^{\bigcdot}$ from formulae of $\Phi^{\bigcdot}$ to $\textit{FOMLO}^-$ formulae as follows:
    \begin{itemize}
        \item $\tau^{\bigcdot}(e,x)= e(x)$
        \item $\tau^{\bigcdot}(\neg \phi,x)= \neg \tau^{\bigcdot}(\phi,x)$
        \item $\tau^{\bigcdot}(\phi [\wedge,\vee] \psi,x) = \tau^{\bigcdot}(\phi,x) [\wedge,\vee] \tau^{\bigcdot}(\psi,x)$
        \item $\tau^{\bigcdot}(\phi \in \psi,x)= \tau^{\bigcdot}(\phi,x) \wedge (\tau^-_+(\psi,x) \vee \tau^-_\in(\psi,x) \vee \tau^-_\in(\psi,x))$
        (We define $\tau^-_{+/\in/-}$ below.)
    \end{itemize}
    Considering that disjoint intervals can be recreated by their starting points, intermediate and endpoints, we define the $\tau^-_+, \tau^-_\in,\tau^-_-$ translation functions  respectively, from formulae of $\Phi^-$ to \textit{FOMLO}$^-$ as follows. 
    \begin{itemize}
        \item $\tau^-_{\{+,\in,-\}}(s,x)=s_{\{+,\in,-\}}(x)$
        \item $\tau^-_+(\phi \rightarrowtail \psi,x)= \tau^{\bigcdot}(\phi,x) \wedge \forall^{<x} z.\ \bigl[\tau^{\bigcdot}(\phi,z) \rightarrow \exists_{>z}^{<t}z'.\ \tau^{\bigcdot}(\psi \wedge \neg \phi,z') \bigr] $
        \item $\tau^-_\in(\phi \rightarrowtail \psi,x)= \exists^{< x} z.\ \bigl[\tau^-_+(\phi \rightarrowtail \psi,z) \wedge \forall^{\leq x}_{>z} z'.\ \neg\tau^{\bigcdot}(\psi \wedge \neg \phi,z')\bigr]$
        \item $\tau^-_-(\phi \rightarrowtail \psi,x)= \tau^{\bigcdot}(\psi \wedge \neg \phi,x) \wedge \exists^{< x} z.\ \bigl[\tau^-_+(\phi \rightarrowtail \psi,z)\wedge \forall^{\leq x}_{>z} z'.\ \neg\tau^{\bigcdot}(\psi \wedge \neg \phi,z')\bigr]$
        \item $\begin{matrix*}[l]
            \tau^-_+(\phi \sqcup \psi,x) =& \bigl[\tau^-_+(\phi,x) \wedge \neg \tau^-_+(\psi,x)\wedge \neg\tau^-_\in(\psi,x) \wedge \neg\tau^-_-(\psi,x)\bigr] \vee\ \bigl[\tau^-_+(\phi,x) \wedge  \tau^-_+(\psi,x)\bigr]\\
        &\vee\ \bigl[\tau^-_+(\psi,x)\wedge \neg \tau^-_+(\phi,x)\wedge \neg\tau^-_\in(\phi,x) \wedge \neg\tau^-_-(\phi,x)\bigr]
        \end{matrix*}$
       
        \item $\begin{matrix*}[l] \tau^-_\in(\phi \sqcup \psi,x)=& \bigl[\tau^-_+(\phi,x) \vee \tau^-_\in(\psi,x)\bigr] \vee \bigl[\tau^-_+(\psi,x) \vee \tau^-_\in(\phi,x)\bigr] \vee \bigl[\tau^-_+(\phi,x) \wedge \tau^-_-(\psi,x)\bigr] \\ 
        & \vee\ \bigl[\tau^-_-(\phi,x) \wedge  \tau^-_+(\psi,x)\bigr]
                \end{matrix*}$

        \item $\begin{matrix*}[l]
            \tau^-_-(\phi \sqcup \psi,x) =& \bigl[\tau^-_-(\phi,x) \wedge \neg \tau^-_+(\psi,x)\wedge \neg\tau^-_\in(\psi,x) \wedge \neg\tau^-_-(\psi,x)\bigr] \vee\ \bigl[\tau^-_-(\phi,x) \wedge  \tau^-_-(\psi,x)\bigr]\\
        &\vee\ \bigl[\tau^-_-(\psi,x)\wedge \neg \tau^-_+(\phi,x)\wedge \neg\tau^-_\in(\phi,x) \wedge \neg\tau^-_-(\phi,x)\bigr]
        \end{matrix*}$
    \end{itemize}
The remaining translations are similar to the ones already presented and therefore omitted. It is easy to see that the conditions for an instant to be the starting, intermediate or endpoint of a formula of \LPH$^-$ is described by \textit{FOMLO}$^-$ formulae. Consequently, given a formula $\phi$ of  $\Phi^{\bigcdot}$, $\mathcal{M}^-,t\models \phi \leftrightarrow \mathcal{M}_{FO},t \models \tau^{\bigcdot}(\phi,t)$. Given a formula $\phi$ of $\Phi^-$ it holds:
\[
\mathcal{M}^-,[ts,te] \models \phi\ \text{iff}\ \mathcal{M}_{FO},ts \models \tau^-_+(\phi,ts) \wedge \forall_{>ts}^{<te} t.\ \mathcal{M}_{FO},t \models\tau^-_\in(\phi,t) \wedge \mathcal{M}_{FO},te \models \tau^-_\in(\phi,te)
\]
Finally, given a formula $\phi$ of $\Phi^-$ it holds:
\[\mathcal{M}^-,[ts,\infty) \models \phi\ \text{iff}\ \mathcal{M}_{FO},ts \models \tau^-_+(\phi,ts) \wedge \forall_{>ts} t.\ \mathcal{M}_{FO},t \models\tau^-_\in(\phi,t)\qedhere\]
\end{proof}
From, Theorems~\ref{theorem:gabbay} and \ref{theorem:idFOMLO} we deduct that:
\begin{theorem}
\label{theorem:idppltl}
For every formula $\phi$ of \LPH$_o^-$, for all models $\mathcal{M}^-$ and their faithful models $\mathcal{M}_{TL}$\footnote{We omit the definition of faithful models of \LPH\ and \textit{LTL} as they are defined in a similar manner to \LPH\ and \textit{FOMLO}.}:
\begin{itemize} 
\item if $\phi \in \Phi^{\bigcdot}$ then there exists a formula $\phi_t$ of \textit{LTL}$[\texttt{Y}\texttt{S}]$ such that  $\mathcal{M}^-,t\models\phi$ iff $\mathcal{M}_{TL},t\models\phi_t$.
\item if $\phi \in \Phi^-$ then there exist formulae $\phi_t^+,\phi_t^\in,\phi_t^-$ of \textit{LTL}$[\texttt{Y}\texttt{S}]$ such that $\mathcal{M}^-,[ts,te]\models\phi$ iff: 
\[\mathcal{M}_{TL},ts \models \phi^+_t \wedge \forall_{>ts}^{<te} t.\ \mathcal{M}_{TL},t \models\phi^\in_t \wedge \mathcal{M}_{TL},te \models \phi^-_t
\]
and, $\mathcal{M}^-,[ts,\infty)\models\phi$ iff $\mathcal{M}_{TL},ts \models \phi^+_t \wedge \forall_{>ts} t.\ \mathcal{M}_{TL},t \models\phi^\in_t$
\end{itemize}
\end{theorem}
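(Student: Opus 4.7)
The plan is to prove Theorem~\ref{theorem:idppltl} by composing the two translations supplied by Theorems~\ref{theorem:gabbay} and~\ref{theorem:idFOMLO}, applied case by case on the formula category. Before starting I would make precise the notion of an \textit{LTL}$[\texttt{Y}\texttt{S}]$ model $\mathcal{M}_{TL}$ being faithful to an \LPH$_o^-$ model $\mathcal{M}^-$: the set of propositions for $\mathcal{M}_{TL}$ contains $e$ for each event predicate $e\in\mathcal{P}^e$ and three propositions $s_+,s_\in,s_-$ for each state predicate $s\in\mathcal{P}^s$, with the valuations obtained by transporting $V^{\bigcdot},V^-$ in exactly the same way as for $\mathcal{M}_{FO}$ above (that is, $V^{TL}(e)=V^{\bigcdot}(e)$, and $(V^{TL}(s_+),V^{TL}(s_\in),V^{TL}(s_-))$ is the unique triple whose image under $\rho$ equals $V^-(s)$, which is well-defined because $\Phi^-$-formulae hold on disjoint intervals). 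With this definition, an \LPH$_o^-$ model $\mathcal{M}^-$, its faithful \textit{FOMLO}$^-$ model $\mathcal{M}_{FO}$, and a faithful \textit{LTL}$[\texttt{Y}\texttt{S}]$ model $\mathcal{M}_{TL}$ agree on every atomic predicate at every time point.

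Now I would split on whether $\phi\in\Phi^{\bigcdot}$ or $\phi\in\Phi^-$. In the first case, Theorem~\ref{theorem:idFOMLO}(1) produces a \textit{FOMLO}$^-$ formula $\phi(x)$ with one free variable such that $\mathcal{M}^-,t\models\phi$ iff $\mathcal{M}_{FO},t\models\phi(t)$ whenever $\mathcal{M}_{FO}$ is faithful to $\mathcal{M}^-$. Applying Theorem~\ref{theorem:gabbay} to $\phi(x)$ yields an \textit{LTL}$[\texttt{Y}\texttt{S}]$ formula $\phi_t$ satisfying $\mathcal{M}_{FO},t\models\phi(t)$ iff $\mathcal{M}_{TL},t\models\phi_t$ for every faithful $\mathcal{M}_{TL}$. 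Chaining the two equivalences and using the fact that faithfulness composes (by construction of the proposition set above), we obtain $\mathcal{M}^-,t\models\phi$ iff $\mathcal{M}_{TL},t\models\phi_t$, as required.

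In the second case, Theorem~\ref{theorem:idFOMLO}(2) gives three \textit{FOMLO}$^-$ formulae $\phi^+(x),\phi^\in(x),\phi^-(x)$ characterising the starting, intermediate, and ending points of the intervals at which $\phi$ holds. Applying Theorem~\ref{theorem:gabbay} pointwise to each produces three \textit{LTL}$[\texttt{Y}\texttt{S}]$ formulae $\phi_t^+,\phi_t^\in,\phi_t^-$. Substituting these equivalences into the conjunction given by Theorem~\ref{theorem:idFOMLO}(2) yields, for bounded intervals,
\[
\mathcal{M}^-,[ts,te]\models\phi \;\text{iff}\; \mathcal{M}_{TL},ts\models\phi_t^+ \wedge \forall_{>ts}^{<te} t.\, \mathcal{M}_{TL},t\models\phi_t^\in \wedge \mathcal{M}_{TL},te\models\phi_t^-,
\]
and analogously for the right-open case $[ts,\infty)$, using the corresponding clause of Theorem~\ref{theorem:idFOMLO}(2).

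I expect the only delicate point to be the bookkeeping of faithfulness across the three model classes; in particular, one must verify that a triple of \textit{LTL}$[\texttt{Y}\texttt{S}]$ propositions $(s_+,s_\in,s_-)$ consistent with the disjoint-intervals requirement on state predicates is exactly what is needed for the $\mathcal{M}_{FO}\leftrightarrow\mathcal{M}_{TL}$ faithfulness of Theorem~\ref{theorem:gabbay} to match the $\mathcal{M}^-\leftrightarrow\mathcal{M}_{FO}$ faithfulness used in Theorem~\ref{theorem:idFOMLO}. Once this is set up correctly, the remainder of the proof is a purely syntactic composition and requires no further induction on $\phi$.
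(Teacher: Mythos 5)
Your proposal is essentially the paper's own argument: the paper derives Theorem~\ref{theorem:idppltl} directly by composing Theorem~\ref{theorem:idFOMLO} with Theorem~\ref{theorem:gabbay}, exactly as you do, case-split included. Your explicit construction of the faithful \textit{LTL}$[\texttt{Y}\texttt{S}]$ model via the inverse of $\rho$ (well-defined because $\Phi^-$-formulae hold on disjoint intervals) just fills in the detail the paper relegates to its footnote, so the two proofs coincide in substance.
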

Now we will show that \textit{LTL}$[\texttt{Y}\texttt{S}]$ is expressible in \LPH$^-_o$. 
We define the translation $\tau^{r}:\Phi_t\rightarrow\Phi^{\bigcdot}$ where $\Phi_t$ is the set of formulae of \textit{LTL}$[\texttt{Y}\texttt{S}]$ and $\Phi^{\bigcdot}$ is a subset of \LPH$^-_o$ formulae, as follows:
\begin{itemize}
    \item $\tau^r(p)=p$
    \item $\tau^r(\neg \phi) = \neg \tau^r(\phi)$
    \item $\tau^r(\phi\ [\wedge, \vee]\ \psi) = \tau^r(\phi)\ [\wedge, \vee]\ \tau^r(\psi)$
    \item $\tau^r(\texttt{Y}\phi) = \bigl(\tau^r(\phi) \vee \neg \tau^r(\phi)\bigr) \in \bigl( \tau^r(\phi) \rightarrowtail \neg \tau^r(\phi)\bigr) \wedge \neg \plstart\bigl(\tau^r(\phi) \rightarrowtail \neg \tau^r(\phi)\bigr)$
    \item $\tau^r(\phi\ \texttt{S}\ \psi) = \bigl(\tau^r(\phi) \vee \neg \tau^r(\phi)\bigr) \in \bigl( \tau^r(\psi) \rightarrowtail \neg \tau^r(\phi)\bigr)$
\end{itemize}
Note that in the case of $\texttt{Y}$ and $\texttt{S}$, $\tau^r(\phi) \vee \neg \tau^r(\phi)$ is true everywhere but is restricted via the $\in$ modality. It is clear that, given a finite set of propositions $\mathcal{P}$ of \textit{LTL}$[\texttt{Y}\texttt{S}]$, for all models $\mathcal{M}_{TL}$ and their corresponding \textit{faithful} \LPH$^-_o$ models (for all $p \in \mathcal{P}$ it holds $V^{TL}(p) = V^{\bigcdot}(p_e)$ where $p_e$ is an event proposition), and for all \textit{LTL}$[\texttt{Y}\texttt{S}]$ formulae $\phi$ it holds $\mathcal{M}_{TL},t\models \phi $ iff $ \mathcal{M}^-,t\models \tau^r(\phi)$.

\subsection{\LPH$^-$ and \textit{LTL}$[\texttt{X}\texttt{U},\texttt{Y}\texttt{S}]$}
In this section we will show that \LPH$^-$, the extension of \LPH$^-_o$ (i.e., with $\plfilter$ and $\looparrowright$), has equal expressive power with \textit{LTL}$[\texttt{X}\texttt{U},\texttt{Y}\texttt{S}]$. Our approach is similar to the previous section, however this time we will translate formulae of \LPH$^-$ to \textit{FOMLO} (instead of \textit{FOMLO}$^-$). We will use Kamp's theorem:
\begin{theorem}[Kamp~\cite{Kamp1968}]
\label{theorem:kamp}
    Given any \textit{FOMLO} formula $\phi(x)$ with one free variable, there is an LTL formula $\theta$, such that $\theta\equiv\phi(x)$ for all models $\mathcal{M}_{FO}$ and $\mathcal{M}_{TL}$.
\end{theorem}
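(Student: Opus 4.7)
The plan is to follow the separation-based approach of Gabbay, Pnueli, Shelah, and Stavi~\cite{Gabbay_1980}, which provides a cleaner route than Kamp's original combinatorial proof. The central technical device is a \emph{separation theorem}: every formula of \textit{LTL}$[\texttt{X}\texttt{U},\texttt{Y}\texttt{S}]$ can be rewritten as a Boolean combination of pure past, pure present, and pure future subformulas. Once separation is established, the translation from \textit{FOMLO} to \textit{LTL} proceeds by structural induction with the help of Ehrenfeucht--Fra\"iss\'e games on linear orders.

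First I would prove the separation theorem by induction on the nesting depth of temporal operators. The easy cases are formulas already purely one-sided, and the critical cases are operators whose arguments mix past and future, e.g.\ $\texttt{Y}(\phi_1\,\texttt{U}\,\phi_2)$ or $(\texttt{Y}\phi)\,\texttt{U}\,\psi$. Here I would rely on a collection of algebraic identities that push $\texttt{Y}$ and $\texttt{S}$ outward across $\texttt{X}$ and $\texttt{U}$ (and vice versa), applied until every temporal operator is purely past or purely future. Verifying these identities against the reflexive semantics on $\mathbb{N}$ is tedious but routine.

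Second I would establish the \textit{FOMLO}-to-\textit{LTL} translation by induction on $\phi(x)$. Atomic formulas translate to propositions or constants; Boolean connectives are handled by the inductive hypothesis; the only hard case is $\exists y.\,\psi(x,y)$. From the vantage of $x$, a witness $y$ lies either weakly in the past ($y\leq x$) or strictly in the future ($y>x$). By an Ehrenfeucht--Fra\"iss\'e argument over $(\mathbb{N},<)$ with the relevant monadic predicates, the set of $(k,\bar r)$-types realised on either side is finite and each type is definable in \textit{LTL} on that side by the inductive hypothesis. The separation theorem then lets me recombine these past and future fragments into a single \textit{LTL} formula $\theta$ equivalent to $\phi(x)$.

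The main obstacle I anticipate is the separation theorem itself: the needed mixed-direction identities do not follow from any small, obvious set of axioms, so one must either enumerate them by hand (as in~\cite{Gabbay_1980}) or invoke an equivalent algebraic/automata-theoretic characterisation via counter-free \textgreek{w}-automata and Sch\"utzenberger's theorem. The \textit{FOMLO}-to-\textit{LTL} step is then largely mechanical. An alternative route would bypass separation and prove completeness directly via a game-theoretic equivalence between $k$-round Ehrenfeucht--Fra\"iss\'e games and temporal-formula depth, but handling $\texttt{U}$ and $\texttt{S}$ in the games requires a delicate bisimulation analysis that I would prefer to avoid if the separation-based route is available.
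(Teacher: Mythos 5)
The paper does not actually prove this statement: Theorem~\ref{theorem:kamp} is imported as a classical result of Kamp, exactly as Theorem~\ref{theorem:gabbay} is imported for the pure-past case, so your proposal is not an alternative to an argument in the paper but an attempt to reprove the cited theorem itself. Taken on those terms, your overall plan---the separation route of Gabbay, Pnueli, Shelah and Stavi~\cite{Gabbay_1980}---is indeed a standard and legitimate way to obtain Kamp's theorem over $(\mathbb{N},<)$, and anticipating that the separation theorem is where the real work lies is accurate.

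However, the second step as you describe it has a genuine gap: it conflates two different proof strategies. In the separation-based proof, the elimination of $\exists y.\,\psi(x,y)$ is not done by an Ehrenfeucht--Fra\"iss\'e analysis of types; it is done by introducing a fresh monadic predicate (a ``marker'' for the position of the free variable), applying the inductive translation to get an \textit{LTL}$[\texttt{X}\texttt{U},\texttt{Y}\texttt{S}]$ formula over the extended alphabet, invoking separation to rewrite it as a Boolean combination of pure past, pure present and pure future formulae relative to the marked point, and only then eliminating the marker. Your version instead claims that ``each type realised on either side is definable in \textit{LTL} on that side by the inductive hypothesis''---but the inductive hypothesis is about subformulae of $\phi(x)$, not about types of coloured linear orders; establishing definability of types by induction on game rank is precisely the content of the composition-method proof (e.g.\ Rabinovich's), which then needs no separation at all. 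As written, the sketch borrows the hard step of each route while supplying neither: separation is only asserted (and is the heavy lifting, including the care needed for the reflexive semantics and the left endpoint of $\mathbb{N}$), and the type-recombination step is unjustified. Commit to one route---marker predicate plus a proved or properly cited separation theorem, or the composition/EF route with the type-definability lemma proved---rather than the hybrid, which does not go through as stated.
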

Consequently, the only thing that remains to prove that \LPH$^-$ is expressible in \textit{LTL}$[\texttt{X}\texttt{U},\texttt{Y}\texttt{S}]$ is to show that formulae involving the minimal range operator ($\looparrowright$) and filtering ($\plfilter$) are expressible in \textit{FOMLO}. Similar to the proof of Theorem~\ref{theorem:idFOMLO}, this is straightforward by extending the translation functions $\tau^-_+, \tau^-_\in,\tau^-_-$ for supporting `$\looparrowright$' and `$\plfilter$'.
We begin with the translation of formulae that involve the minimal range operator ($\looparrowright$):
\begin{align*}
&\tau^-_+(\phi \looparrowright \psi,x) = \tau^{\bigcdot}(\phi,x) \wedge \exists_{>x} te.\ \Bigl[\tau^{\bigcdot}(\psi \wedge \neg \phi, te) \wedge \forall_{>x}^{<te} t.\  \bigl[\neg \tau^{\bigcdot}(\phi,x) \wedge \neg\tau^{\bigcdot}(\psi \wedge \neg \phi, t) \bigr]\Bigr]\\
&\tau^-_\in(\phi \looparrowright \psi,x) =\exists_{<x} ts.\exists_{>x} te.\ \Bigl[\tau^{\bigcdot}(\phi,ts) \wedge \tau^{\bigcdot}(\psi \wedge \neg \phi, te) \wedge \forall_{>ts}^{<te} t.\  \bigl[\neg \tau^{\bigcdot}(\phi,t) \wedge \neg\tau^{\bigcdot}(\psi \wedge \neg \phi, t) \bigr]\Bigr]\\
&\tau^-_-(\phi \looparrowright \psi,x) =\exists_{<x} ts.\Bigl[\tau^{\bigcdot}(\phi,ts)\wedge  \tau^{\bigcdot}(\psi \wedge \neg \phi, x) \wedge \forall_{>ts}^{<x} t.\  \bigl[\neg \tau^{\bigcdot}(\phi,t) \wedge \neg\tau^{\bigcdot}(\psi \wedge \neg \phi, t) \bigr]\Bigr]
\end{align*}
Essentially, the translation of $\phi\ \looparrowright\ \psi$ is similar to the translation of $\phi\ \rightarrowtail \psi$, however in this case it is clear that there is a need for future FOMLO formulae. Concerning the translation of formulae involving filtering ($\plfilter$), first we define formulae $C^{k+}(x_0,\phi,\psi)$ as follows:
\begin{align*}
    C^{k+}(x_0,\phi,\psi)=&
    \exists_{>x_0} x_1. \cdots \exists_{>x_{i-1}} x_i.  \cdots \exists_{>x_{k-1}} x_k. \nexists_{>x_0}^{<x_1} x_{0,1}.\nexists_{>x_i}^{<x_{i+1}} x_{i,i+1}. \cdots \nexists_{>x_{k-1}}^{<x_k} x_{k-1,k}.\\
    &\bigl[ \phi(x_1) \wedge \cdots \wedge \phi(x_{k-1}) \wedge \psi(x_{k})\bigr]
\end{align*}
denoting that $\phi$ is true from $x_1$ to $x_{k-1}$, $\psi$ is true at $x_k$, and all $x_i$ are contiguous and right of $x_0$. Similar to  $C^{k+}$ define the $C^{k-}$ for the left direction from $x_0$. Here, we will only define the translations for the $\plfilter_{<}$ case as the remaining cases can be easily defined in a similar manner.
\begin{align*}
&\tau^-_+(\phi\ \plfilter_{<}\ n ,x) = \tau^-_+(\phi,x) \wedge \bigl[ C^{1+}(x,\tau^-_\in(\phi),\tau^-_-(\phi)) \vee \cdots \vee C^{n-1+}(x,\tau^-_\in(\phi),\tau^-_-(\phi)) \bigl]\\
&\tau^-_\in(\phi\ \plfilter_{<}\ n ,x) =  \tau^-_\in(\phi,x) \wedge \Bigl[\bigl[ C^{1-}(x,\tau^-_\in(\phi),\tau^-_+(\phi)) \wedge C^{n-2+}(x,\tau^-_\in(\phi),\tau^-_-(\phi))\bigl]\\
&\qquad\qquad\qquad\qquad\vee \bigl[C^{2-}(x,\tau^-_\in(\phi),\tau^-_+(\phi)) \wedge C^{n-3+}(x,\tau^-_\in(\phi),\tau^-_-(\phi))\bigr]\vee \cdots\\
&\qquad\qquad\qquad\qquad \vee \bigl[C^{n-2-}(x,\tau^-_\in(\phi),\tau^-_+(\phi)) \wedge C^{1+}(x,\tau^-_\in(\phi),\tau^-_-(\phi))\bigr]\Bigr]\\
& \tau^-_-(\phi\ \plfilter_{<}\ n ,x) = \tau^-_-(\phi,x) \wedge \bigl[ C^{1-}(x,\tau^-_\in(\phi),\tau^-_+(\phi)) \vee \cdots \vee C^{n-1-}(x,\tau^-_\in(\phi),\tau^-_+(\phi)) \bigl]
\end{align*}
It can be seen that although a translation of $\phi\ \plfilter_< n$ exists, the size of the translated formula is linear to $n$. Note that a translation with smaller size might be possible, however for our expressiveness study it is not required to find the optimal translation.

Given all of the above, from Theorem~\ref{theorem:kamp}, it is clear that the analog of Theorem~\ref{theorem:idppltl} also holds for \LPH$^-$ and \textit{LTL}$[\texttt{X}\texttt{U},\texttt{Y}\texttt{S}]$. For the opposite direction, it suffices to show that there are $\tau^r$ translations from formulae involving the remaining \textit{LTL} modalities, i.e., $\texttt{X}$ and $\texttt{U}$, to \LPH$^-$ formulae. For convenience we first define 
$c(\phi)= \phi \rightarrowtail \neg \phi$
where $\phi \in \Phi^{\bigcdot}$. Essentially,  $c(\phi)$ holds for the maximal intervals $[ts,te]$ or $[ts,\infty)$ for which $\forall_{\geq ts}^{<te} t. \mathcal{M},t\models\phi$ or $\forall_{\geq ts} t. \mathcal{M},t\models\phi$ respectively. Therefore we define the corresponding $\tau^r$ translations as follows. 
\begin{align*}
     &\tau^r(\texttt{X}\phi) = (\tau^r(\phi) \in c(\tau^r(\phi))) \wedge \neg \plend(c(\tau^r(\phi))))\ \vee\ \plstart(\neg \tau^r(\phi) \looparrowright\plend(c(\neg \tau^r(\phi))))\\
    &\tau^r(\phi\ \texttt{U}\ \psi) = ((\tau^r(\phi) \vee \neg \tau^r(\phi)) \in ( \plstart(c(\tau^r(\phi)) \sqcap c(\neg \tau^r(\psi))) \looparrowright \\
    &\qquad\qquad\qquad (\plend(c(\tau^r(\phi)) \sqcap c(\neg \tau^r(\psi))) \wedge \tau^r(\psi)))) \vee\ \tau^r(\psi)
\end{align*}
Concerning the translation of $\texttt{X}\phi$, the left part of the disjunction holds true for all instants included in an interval $[ts,te)$ where $c(\tau^r(\phi)))$ is true, while the right part is true at the start of an interval $[t,t+1]$ where $\neg \tau^r(\phi) \looparrowright\plend(c(\neg \tau^r(\phi)))$ is true. In the case of $\tau^r(\phi\ \texttt{U}\ \psi)$, the translation can be divided into two parts: the first part uses the inclusion operator between the tautology $\tau^r(\phi) \vee \neg \tau^r(\phi)$ (true everywhere) and the minimal range formula between (a) the start of a period at which both $\phi$ and $\neg \psi$ are true for all points (excluding the end) and (b) the end of a period $[ts,te)$ at which both $\phi$ and $\neg \psi$ are true and $\psi$ holds at $te$, thus capturing the cases where $\phi$ is true before $\psi$ becomes true; the second part of the translation is the case of $\tau^r(\psi)$ which captures single instances of $\psi$.  
Considering all of the above, we can now say that the \LPH$^-$ fragment of \LPH\ has equal expressive power with \textit{LTL}$[\texttt{X}\texttt{U},\texttt{Y}\texttt{S}]$.

\subsection{Expressiveness of \LPH}
Concerning the complete language \textit{\LPH}, a comparison with \textit{LTL} is not possible as the structures on which semantics is based are incomparable, even for atomic entities. This is because dynamic temporal phenomena may hold on non-disjoint intervals which by default require the half plane of a 2-dimensional temporal space for their representation. 
When compared to \textit{DFO}, it can be easily seen that for all formulae of \LPH\ there are equivalent formulae with two free variables of \textit{DFO}. For example, a dynamic temporal phenomenon proposition $p$ can be represented by a dyadic predicate $p(x,y)$ such that $\mathcal{M},[ts,te]\models p\leftrightarrow\mathcal{M}_{DFO},ts,te \models p(x,y)$. In a similar manner to the previous sections, we can define translations from  \textit{\LPH} to \textit{DFO} for the remaining formulae---this time however with two free variables corresponding to starting and ending instants. However, the reverse direction does not hold. This can be shown with the following  example. Consider the \textit{DFO} formula $\phi(x,y)=\neg p(x,y)$, since negation is not included for formulae\footnote{We chose to omit negation from formulae of $\Phi^=$ as it would affect significantly the performance of our implementation.} of $\Phi^=$ there is no formula $\phi_t$ of \LPH\ such that $\mathcal{M},[ts,te] \models \phi_t \leftrightarrow \mathcal{M}_{DFO},ts,te\models\neg p(x,y)$ for all models. Therefore $\LPH$ is strictly less expressive than \textit{DFO}.       
\section{Stream processing}
\label{sec:stream}
In this section, we formally present the correctness criteria for stream processing with the \LPH\ language. Given a stream, i.e., an arbitrary long sequence of time associated atomic formulae of $\Phi$, the evaluation at a given instant $t$, of formulae that refer only to the past ($\phi$ of \LPH$^-_o$) is an easy task, as their truth value can be determined for all $t' \leq t$ (see Equation~\eqref{eq:semprop}). However, this is not the case for formulae such as $\tau^r(\texttt{X}p)$ and $\tau^r(p\ \texttt{U}\ q$) that refer to the future, as their truth value at an instant $t$ may depend on future information ($>t$). Consequently, in order to guarantee \textit{correctness},  \textit{monotonicity} and \textit{punctuality}---we will define these shortly---, the two valued semantics of \LPH, are not sufficient for the evaluation of formulae on constantly evolving streams. In order to treat the issue of evaluations with unknown status---i.e., when all required information is not available at current time---, we follow an approach similar to~\cite{Bauer_2011}. We extend the semantics of \LPH\ for stream processing, to utilise three values: true ($\top$), false ($\bot$) and unknown ($?$). 
Due to space limitations, we will omit the presentation of the three valued semantics\footnote{The complete three valued semantics are available in~\url{https://manospits.github.io/files/Three_valued_semantics.pdf}.} in this paper, instead we will focus on formalising the notions of stream, stream processor, and define the properties of  correctness, punctuality and monotonicity for stream processors of $\LPH$. 

A \textit{stream}, at any instant $t$ can be represented by the finite model $\mathcal{M}_t=\langle T_t, I_t, <,V^{\bigcdot}, V^-, V^= \rangle$ where  $T_t=\{0,1,\cdots,t\}$, $I_t= T_t\times T_t\cup\{[ts,\infty):ts\in T_t\}$, and $V^{\bigcdot}, V^-, V^=$ are valuation functions defined in similar manner to Section~\ref{sec:language}. 

A \textit{stream processor}, in symbols $\mathcal{SP}_t$, is defined by the triplet $\langle \Lambda^{\bigcdot}_t,\Lambda^-_t,\Lambda^=_t \rangle$ where $t\in T$,
 $\Lambda^{\bigcdot}_t:\Phi^{\bigcdot}\times T_t \rightarrow \{\top,\bot,?\}$, $\Lambda^-_t:\Phi^{-}\times (I_t^c\cup I^+_T) \rightarrow \{\top,\bot,?\}$, and $\Lambda^=_t:\Phi^{=}\times(I^c_t\cup I^+_t ) \rightarrow \{\top,\bot,?\}$, are formulae valuation functions assigning truth values on formulae-instants/intervals pairs and $I^c_t=T_t\times T_t$ and  $I^+_t=\{[ts,t+] : ts,t \in T_t\}$. Intervals of $I_t^c$, (e.g., [ts,te]) denote that a phenomenon started at $ts$ and ended at $te$, while intervals of $I^+_t$,  (e.g., $[ts,t+]$) denote that a phenomenon started at $ts$, and continues to be true/unknown at $t$ but does not end at $t$. Intervals of $I^+_t$ are useful for capturing the truth value of valuations that are true but are still ongoing---see for example the semantics of $\phi \rightarrowtail \psi$ for intervals open to $\infty$. We assume that the input phenomena are ordered and their truth value is never unknown. Now, we define the \textit{correctness}, \textit{punctuality} and \textit{monotonicity} properties for $\mathcal{SP}$.
 
 \textbf{Correctness.} A stream processor has the \textit{correctness} property iff given any stream, for all $t$ and for any $\phi\in \Phi$ evaluation by $\mathcal{SP}_t$ (i.e., via $\Lambda^{\bigcdot/-/=}$) that is true (false) at an instant $t_i$  or interval $i$, $\phi$ is also true (false) (i.e., via the semantics of Section~\ref{sec:language}) at $t_i$ or $i$ in $\mathcal{M}_{t}$.

\textbf{Monotonicity.} A stream processor has the \textit{motonocity} property iff given any stream, for all $t$ and for $\phi\in \Phi$ evaluation by $\mathcal{SP}_t$ that is true (false) at an instant $t_i$ or interval $i$, will also be evaluated to be true (false) at $t_i$ or $i$ by all $\mathcal{SP}_{t'}$ with $t'>t$.

 \textbf{Punctuality.} A stream processor has the punctuality property iff given any stream, for any $\phi$, and for all instants $t_i$ or intervals $i$ if there exists minimum $t\geq t_i$ such that $\phi$ is true (false) for all $t'\geq t$ in all $\mathcal{M}_t'$ at $t_i$ or $t$ then $\mathcal{SP}_t$ evaluates $\phi$ to be true (false) at an instant $t_i$ or interval $i$.

We say that a stream processor for \LPH\ is  \textit{proper} iff it has all the three aforementioned properties. It is easy to see that some formulae, given certain streams, can never be true but always stay unknown. For example, consider the formula $\tau^r(\texttt{G}p)$, where $\texttt{G}$ is the `globally' \textit{LTL} operator, and a stream where $p$ is true at all instants; at any given point $t$ in time, the stream processor is agnostic to the future, therefore in order to maintain the \textit{monotonicity} property, $\tau^r(\texttt{G}p)$ will be evaluated by $\mathcal{SP}_t$ for all $t'\leq t$ to be unknown.

Phenesthe+, is a $\textit{proper}$ stream processor of \LPH. While we will not present a formal proof in this paper, we will briefly discuss its processing and its implementation. Phenesthe+ is a complex event processing engine that given an input stream and a set of temporal phenomena definitions, will produce an output stream of temporal phenomena detections i.e., phenomena associated with a set of instants or intervals at which they are true. Compared to automata based methods, the phenomena are compiled via rewriting into an internal Prolog representation which is then later used for processing. This procedure is linear with respect to the size of the formulae involved. The phenomena definitions can be hierarchical, and their processing, if possible, can happen in parallel.  Phenesthe+ detects phenomena by performing temporal queries over tumbling temporal windows of size equal to a user defined step ($ST$) size. A temporal window contains all the new information that has arrived since the last temporal query, as well as information from previous windows that has a possible future use. For example, given the formula $\phi \looparrowright \psi$, if $\phi$ is true in the current window but $\psi$ is not, then $\phi$ must be retained. Note that information from previous windows was also retained in the previous version of Phenesthe for valuations that are true but ongoing, or involved dynamic temporal phenomena. All information outside the temporal window that does not have `future use' is discarded. From a practical perspective it is not viable to keep everything from the past that can contribute to a future detection. Therefore, we allow setting a maximum limit for past information (retaining threshold $RT$). When this threshold is active Phenesthe+ is no longer \textit{proper} with respect to the full stream, but remains \textit{proper} for the part of the stream that is bounded by $RT$. Similarly, the \textit{punctuality} property depends on $ST$, if $ST=1$, then Phenesthe+ is punctual, however if $ST>1$, detections will be produced at the latest $ST-1$ time units after their punctual time. 

In terms of complexity, evaluation of formulae in Phenesthe+ happens via single-scan or in the worst case, i.e., when overlapping intervals are involved, polynomial algorithms with respect to the size of the structure (current temporal window). It has to be noted that while $\LPH^-$ has equal expressive power with LTL, it can accomplish efficient processing of phenomena definitions by utilising intervals to represent set of points. For example an interval $[ts,te]$ produced by the evaluation of the formula $\phi\rightarrowtail\psi$ requires only two points for the representation of all the instants included in $[ts,te]$, therefore contributing significantly to space and processing time economy.

\section{Experimental Evaluation}
\label{sec:evaluation}
We presented the theoretical basis of $\LPH$. Now, we will evaluate the efficiency of our extended stream processing engine on a reproducible\footnote{\url{https://github.com/manospits/Phenesthe/tree/future}} maritime monitoring scenario.

\textbf{Experimental setup}
For our experimental evaluation we use a public dataset containing AIS vessel data,  transmitted over a period of 6 months, from October 1st, 2015 to March 31st, 2016, in the area of Brest, France~\cite{RAY2019} along with spatio-temporal events relating vessels with areas (in total $\approx$ 16M input events). We run our experiments on machine with an Intel i7-3770 CPU running Ubuntu 20.04.6 LTS. The set of maritime phenomena we detect as well as the input events are summarised in Table~\ref{tab:datasum}. We compare stream processing efficiency when the set of maritime phenomena definitions includes and does not include phenomena marked with $\ddag$, i.e., phenomena that utilise $\looparrowright$ and, or $\plfilter$ or depend on phenomena that utilise them. 

\begin{table}[t]
\caption{Input and output phenomena description. `IE', `UE', `US' and `UD' stand for Input/User Event/State/Dynamic temporal phenomenon. Phenomena with  $\dag$ have future dependencies while phenomena with $\ddag$ utilise the new temporal modalities or depend on phenomena that utilise them. The last column lists approximately the number of input or output instants/intervals.}
\label{tab:datasum}
\footnotesize	
\centering
\begin{tabular}{@{}cllc@{}}
\toprule
\textbf{Type}                                    & \textbf{Phenomenon}                             & \textbf{Description}                                                                                                 & \textbf{Number}       \\ \midrule
\multirow{2}{*}{IE} & $\mt{ais}(V,S,C,H)$                           & \begin{tabular}[c]{@{}l@{}}AIS transmitted information  (vessel ID, speed, course, heading)\end{tabular} & 15.8M \\[2pt]
\multicolumn{1}{c}{}                    & $\mt{enters/leaves\{Port,Fishing\}}(V,A)$ & Vessel enters/leaves port/fishing area.                                                                     & 160K  \\
\hline
UE                                     & $\mt{stop\_start/end}(V)$                     & Start/end of a stop.                                                                                        &  800K            \\ \hline
\multirow{9}{*}{US}                     & $\mt{in\_\{port,fishing\}\_area}(V,A)$        & In fishing/port area.                                                                                       & 70K             \\[2pt]
                                        & $\mt{stopped}(V)$                            & Stopped vessel.                                                                                             &      300K        \\[2pt]
                                        & $\mt{underway}(V)$                            & Vessel underway.                                                                                            &    132K          \\[2pt]
                                        & $\mt{moored}(V)$                              & Moored vessel.                                                                                              &        323K      \\[2pt]
                                        & $^{\dag\ddag}\mt{fishing\_warning}(V, F)$                 & \begin{tabular}[c]{@{}l@{}}Warning: Non fishing vessel possibly engaged in fishing.\end{tabular}         &            7K  \\[2pt]
                                        & $^{\dag\ddag}\mt{waiting\_time}(V, P)$                    & Port waiting time.                                                                                          &         42K     \\[2pt]
                                        & $^{\dag\ddag}\mt{long\_waiting\_time}$(V, P)              & \begin{tabular}[c]{@{}l@{}}Warning: waiting time longer than a threshold.\end{tabular}                   &       28K       \\[2pt]
                                        & $\mt{unusual\_stop}(V)$                       & \begin{tabular}[c]{@{}l@{}}Warning: vessel performs a stop in an unexpected area.\end{tabular}               &     27K         \\[2pt]
                                        & $^{\dag\ddag}\mt{possible\_malfunction}(V)$               & \begin{tabular}[c]{@{}l@{}}Warning: Vessel might have a malfunction.\end{tabular}                         & 3K             \\\hline
\multirow{3}{*}{UD}                     & $^\dag\mt{trip}(V, PA, PB)$                        & Vessel trip from PA to PB.                                                                                  &   39K           \\[2pt]
                                        & $^{\dag\ddag}\mt{suspicious\_trip}(V, PA, PB)$            & Trip from PA to PB contained warnings.                                                                      &   3K           \\[2pt]
                                        & $^\dag\mt{fishing\_trip}(V, PA, FA, PB)$           & \begin{tabular}[c]{@{}l@{}}Fishing trip from $PA$ to $PB$ contained fishing in $FA$.\end{tabular}               &    6K          \\ \bottomrule
\end{tabular}
\end{table}

\textbf{Experimental results}
The results of our evaluation are illustrated in Figure~\ref{fig:results}. We perform complex event processing with $ST=3h$, and  $RT=\{2,4,8,16\}$ days. Figure~\ref{fig:results} (left) shows the average processing time for each experiment. The results show that the addition of $\ddag$ phenomena does not affect processing efficiency significantly, but also that Phenesthe+ is capable of producing detections in less than 2 seconds (multithreaded) when the data retaining threshold is set to 16 days. Note that the performance gain by running the multithreaded version of Phenesthe+ depends on the dependecies between phenomena and the computation of the processing order. For example, $\mt{fishing\_warning}$ and $\mt{waiting\_time}$ can be processed in parallel while $\mt{fishing\_warning}$ and $\mt{suspicious\_trip}$ cannot. In~\cite{pitsikalis22a} we describe the computation of the processing order. With the addition of the new temporal phenomena, we limited the number of phenomena that can be processed in parallel. The results of Figure~\ref{fig:results} (left) confirm this. We also perform complex event processing with $ST=24h$, and set $RT=\infty$ (i.e., keep non-redundant information forever)---recall that our dataset involves a 6 month period. Similar to the previous experiments we compare performance when Phenesthe+ is executed in parallel or serial manner, and with or without temporal phenomena marked with $\ddag$ (see Table~\ref{tab:datasum}). Figure~\ref{fig:results} (middle) shows the average processing time  while Figure~\ref{fig:results} (right) shows the average number of input entities plus retained instants/ intervals for each case. The results show, that in terms of processing time, performance is not significantly affected when including $\ddag$ phenomena in both serial and parallel processing even when Phenesthe+ retains all information. In more detail, apart from the input events (on average 90K per temporal query) when we include $\ddag$ phenomena the number of retained instants/intervals increases on average by 20K, therefore bringing the total number of input+retained entities up to 120K. Even in this setting Phenesthe+, produces detections in approximately 2  and 4  seconds (serial and parallel respectively).
\begin{figure}[t]
\centering
\includegraphics[width=0.3\textwidth]{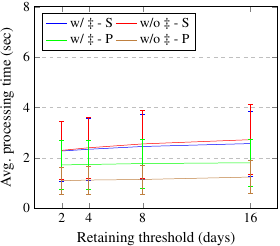}
\includegraphics[width=0.3\textwidth]{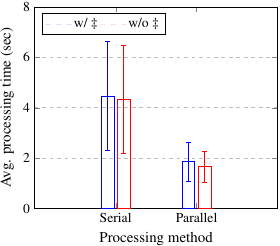}
\includegraphics[width=0.333\textwidth]{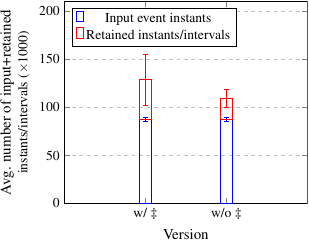}
\caption{Experimental results. Average processing time per query  with $ST=3h$ and $RT$ = $\{2,4,8,16\}$ days or $ST=24h$ and $RT$ = $\infty$ (left) and (middle) respectively. Average number of input plus retained instants/ intervals per query when $ST=24h$ and $RT=\infty$ (right).}
\label{fig:results}
\end{figure}
 
\section{Related work \& Discussion}
\label{sec:relateddisc}
There are several very expressive temporal logics. The HS logic~\cite{halpern1991} is a very powerful logic for representing both instantaneous and durative temporal phenomena. When time is linear and the intervals homogeneous (therefore non overlapping) the HS logic is equally expressive with \textit{LTL} but is exponentially more succinct~\cite{Bozzeli_2018}. In its original version, the HS logic does not make any assumptions on the nature of intervals. In this paper, we showed the \LPH$^-$ has equal expressive power to \textit{LTL}, therefore concerning the linear HS variant studied in~\cite{Bozzeli_2018}, \LPH$^-$ is equally expressive. It is well known, that the chop operator of Venema's CDT logic~\cite{Venema_1990} is inexpressible in the HS logic~\cite{Lodaya_2000,Bresolin_2008}.  \LPH\ supports the chop operator in the form of $\plmeets$. While a formal expressiveness comparison of \LPH\ with the HS or CDT would be desirable, the omission of negation from $\Phi^=$ formulae makes this a challenging task.

Concerning our criteria for proper stream processors of \LPH, as mentioned earlier, their concepts are not entirely new. For example ``correctness'' has a similar notion with ``soundness'' in run-time verification~\cite{Bartocci2018} (i.e., the output should be correct with respect to the specification). Likewise, the ``monotonicity'' property as we have defined it, in run-time monitoring appears as the irrevocability property respectively for monitors~\cite{Aceto2019}---a monitor that has the ``irrevocability'' property is unable to revoke the acceptance or the rejection of a trace. Finally, the ``punctuality'' property can be related to the ``tightness'' property of monitors~\cite{Aceto2019}, under which monitors are restricted to make a choice as soon as there is sufficient information available. In this work, we utilise the similar notions from run-time monitoring and verification for the task of complex event processing. 

From a complex event processing perspective, LARS is a logic based framework for reasoning over streams~\cite{Beck_2018}. While the language of LARS is expressible in \textit{LTL}, the reverse direction does not hold, since LARS does not support `until'. A well known runtime monitoring system with point-based semantics is LOLA~\cite{Angelo2005}. While LOLA does not allow durative phenomena, we saw in section~\ref{sec:expressiveness} that formulae that hold on disjoint intervals can be expressed using point based modalities. It is not possible, however, to model formulae that hold on overlapping intervals. Furthermore, in the worst case LOLA requires memory equal to the size of the trace so far, which is not practical for large industrial applications such as maritime monitoring. In Phenesthe+ we allow the user to choose the retaining threshold.
 A complex event recognition framework is RTEC~\cite{Artikis_2015}. RTEC is a logic based formalism whereby events and fluents are expressed with a variant of the Event Calculus~\cite{kowalski86}. While there isn't a formal study of the expressive power of the language of RTEC, its semantics suggest that it has at most equal expressive power with pure past \textit{LTL}. 
Bauer et. al.~\cite{Bauer_2011}, propose three valued semantics for monitoring \textit{LTL} formulae. In our work, we also use three valued semantics, however for a more general case, as our language allows the representation of temporal phenomena that hold on overlapping intervals, which cannot be modeled in \textit{LTL}. Team semantics for LTL~\cite{Krebs18} or HyperLTL~\cite{Clarkson2014} offer a promising direction towards the representation of concurrent temporal phenomena, however they are limited to a finite number of concurrent traces. In $\LPH$\ a dynamic temporal phenomenon may hold on possibly infinite overlapping intervals.
 
Closing, in this paper we presented \LPH\ and studied the expressive power of its different fragments. Specifically, we showed that \LPH$^-_o$ has equal expressive power with pure past \textit{LTL} while its extension, \LPH$^-$, has equal expressive power with \textit{LTL}. Concerning the complete logic \LPH, we showed that it is strictly less expressive than dyadic first-order logic. Moreover, we defined criteria for \textit{proper} stream processors that use our language, and evaluated Phenesthe+, our stream processing implementation on real maritime data. Our results, show that Phenesthe+ is suitable for the task of maritime monitoring as it produces results in real-time. While the application of our experiment involved the maritime domain, Phenesthe+ is generic, and can be applied in other areas. 

Regarding future work, we aim to study the expressive power of a theoretical variant of \LPH\ that includes negation on formulae of $\Phi^=$ in comparison with two-dimensional modal logics. Furthermore, as one of the main motivations for the creation of \LPH\ was facilitating writing temporal formulae, we plan to compare succinctness of \LPH$^-$ formulae with \textit{LTL} formulae. Finally, we aim to apply Phenesthe+ for human activity monitoring in smart homes.
\bibliographystyle{eptcs}
\bibliography{bibliography}

\begin{thebibliography}{10}
\providecommand{\bibitemdeclare}[2]{}
\providecommand{\surnamestart}{}
\providecommand{\surnameend}{}
\providecommand{\urlprefix}{Available at }
\providecommand{\url}[1]{\texttt{#1}}
\providecommand{\href}[2]{\texttt{#2}}
\providecommand{\urlalt}[2]{\href{#1}{#2}}
\providecommand{\doi}[1]{doi:\urlalt{https://doi.org/#1}{#1}}
\providecommand{\eprint}[1]{arXiv:\urlalt{https://arxiv.org/abs/#1}{#1}}
\providecommand{\bibinfo}[2]{#2}

\bibitemdeclare{article}{Aceto2019}
\bibitem{Aceto2019}
\bibinfo{author}{Luca \surnamestart Aceto\surnameend}, \bibinfo{author}{Antonis
  \surnamestart Achilleos\surnameend}, \bibinfo{author}{Adrian \surnamestart
  Francalanza\surnameend}, \bibinfo{author}{Anna \surnamestart
  Ing\'{o}lfsd\'{o}ttir\surnameend} \& \bibinfo{author}{Karoliina \surnamestart
  Lehtinen\surnameend} (\bibinfo{year}{2019}): \emph{\bibinfo{title}{Adventures
  in Monitorability: From Branching to Linear Time and Back Again}}.
\newblock {\slshape \bibinfo{journal}{Proc. ACM Program. Lang.}}
  \bibinfo{volume}{3}(\bibinfo{number}{POPL}), \doi{10.1145/3290365}.

\bibitemdeclare{article}{Allen_1983}
\bibitem{Allen_1983}
\bibinfo{author}{James~F. \surnamestart Allen\surnameend}
  (\bibinfo{year}{1983}): \emph{\bibinfo{title}{Maintaining knowledge about
  temporal intervals}}.
\newblock {\slshape \bibinfo{journal}{Communications of the ACM}}
  \bibinfo{volume}{26}(\bibinfo{number}{11}), p. \bibinfo{pages}{832–843},
  \doi{10.1145/182.358434}.

\bibitemdeclare{inbook}{Anicic_2010}
\bibitem{Anicic_2010}
\bibinfo{author}{Darko \surnamestart Anicic\surnameend}, \bibinfo{author}{Paul
  \surnamestart Fodor\surnameend}, \bibinfo{author}{Sebastian \surnamestart
  Rudolph\surnameend}, \bibinfo{author}{Roland \surnamestart
  Stühmer\surnameend}, \bibinfo{author}{Nenad \surnamestart
  Stojanovic\surnameend} \& \bibinfo{author}{Rudi \surnamestart
  Studer\surnameend} (\bibinfo{year}{2010}): \emph{\bibinfo{title}{A Rule-Based
  Language for Complex Event Processing and Reasoning}}, p.
  \bibinfo{pages}{42–57}.
\newblock {\slshape \bibinfo{series}{Lecture Notes in Computer Science}}
  \bibinfo{volume}{6333}, \bibinfo{publisher}{Springer Berlin Heidelberg},
  \doi{10.1007/978-3-642-15918-3\_5}.

\bibitemdeclare{article}{Artikis_2015}
\bibitem{Artikis_2015}
\bibinfo{author}{Alexander \surnamestart Artikis\surnameend},
  \bibinfo{author}{Marek \surnamestart Sergot\surnameend} \&
  \bibinfo{author}{Georgios \surnamestart Paliouras\surnameend}
  (\bibinfo{year}{2015}): \emph{\bibinfo{title}{An Event Calculus for Event
  Recognition}}.
\newblock {\slshape \bibinfo{journal}{IEEE Transactions on Knowledge and Data
  Engineering}} \bibinfo{volume}{27}(\bibinfo{number}{4}), pp.
  \bibinfo{pages}{895--908}, \doi{10.1109/TKDE.2014.2356476}.

\bibitemdeclare{book}{Bartocci2018}
\bibitem{Bartocci2018}
\bibinfo{editor}{Ezio \surnamestart Bartocci\surnameend} \&
  \bibinfo{editor}{Yli{\`{e}}s \surnamestart Falcone\surnameend}, editors
  (\bibinfo{year}{2018}): \emph{\bibinfo{title}{Lectures on Runtime
  Verification - Introductory and Advanced Topics}}.
\newblock {\slshape \bibinfo{series}{Lecture Notes in Computer Science}}
  \bibinfo{volume}{10457}, \bibinfo{publisher}{Springer},
  \doi{10.1007/978-3-319-75632-5}.

\bibitemdeclare{article}{Bauer_2011}
\bibitem{Bauer_2011}
\bibinfo{author}{Andreas \surnamestart Bauer\surnameend},
  \bibinfo{author}{Martin \surnamestart Leucker\surnameend} \&
  \bibinfo{author}{Christian \surnamestart Schallhart\surnameend}
  (\bibinfo{year}{2011}): \emph{\bibinfo{title}{Runtime Verification for LTL
  and TLTL}}.
\newblock {\slshape \bibinfo{journal}{ACM Trans. Softw. Eng. Methodol.}}
  \bibinfo{volume}{20}(\bibinfo{number}{4}), \doi{10.1145/2000799.2000800}.

\bibitemdeclare{article}{Beck_2018}
\bibitem{Beck_2018}
\bibinfo{author}{Harald \surnamestart Beck\surnameend}, \bibinfo{author}{Minh
  \surnamestart Dao-Tran\surnameend} \& \bibinfo{author}{Thomas \surnamestart
  Eiter\surnameend} (\bibinfo{year}{2018}): \emph{\bibinfo{title}{LARS: A
  Logic-based framework for Analytic Reasoning over Streams}}.
\newblock {\slshape \bibinfo{journal}{Artificial Intelligence}}
  \bibinfo{volume}{261}, pp. \bibinfo{pages}{16--70},
  \doi{10.1016/j.artint.2018.04.003}.

\bibitemdeclare{inproceedings}{bohlen1998}
\bibitem{bohlen1998}
\bibinfo{author}{Michael~H \surnamestart Bohlen\surnameend},
  \bibinfo{author}{Renato \surnamestart Busatto\surnameend} \&
  \bibinfo{author}{Christian~S. \surnamestart Jensen\surnameend}
  (\bibinfo{year}{1998}): \emph{\bibinfo{title}{Point-versus interval-based
  temporal data models}}.
\newblock In: {\slshape \bibinfo{booktitle}{Proceedings 14th International
  Conference on Data Engineering}}, pp. \bibinfo{pages}{192--200},
  \doi{10.1109/ICDE.1998.655777}.

\bibitemdeclare{article}{Bozzeli_2018}
\bibitem{Bozzeli_2018}
\bibinfo{author}{Laura \surnamestart Bozzelli\surnameend},
  \bibinfo{author}{Alberto \surnamestart Molinari\surnameend},
  \bibinfo{author}{Angelo \surnamestart Montanari\surnameend},
  \bibinfo{author}{Adriano \surnamestart Peron\surnameend} \&
  \bibinfo{author}{Pietro \surnamestart Sala\surnameend}
  (\bibinfo{year}{2018}): \emph{\bibinfo{title}{Interval vs. Point Temporal
  Logic Model Checking: An Expressiveness Comparison}}.
\newblock {\slshape \bibinfo{journal}{ACM Trans. Comput. Logic}}
  \bibinfo{volume}{20}(\bibinfo{number}{1}), \doi{10.1145/3281028}.

\bibitemdeclare{inproceedings}{Bresolin_2008}
\bibitem{Bresolin_2008}
\bibinfo{author}{Davide \surnamestart Bresolin\surnameend},
  \bibinfo{author}{Dario \surnamestart Della~Monica\surnameend},
  \bibinfo{author}{Valentin \surnamestart Goranko\surnameend},
  \bibinfo{author}{Angelo \surnamestart Montanari\surnameend} \&
  \bibinfo{author}{Guido \surnamestart Sciavicco\surnameend}
  (\bibinfo{year}{2008}): \emph{\bibinfo{title}{Decidable and Undecidable
  Fragments of Halpern and Shoham's Interval Temporal Logic: Towards a Complete
  Classification}}.
\newblock In \bibinfo{editor}{Iliano \surnamestart Cervesato\surnameend},
  \bibinfo{editor}{Helmut \surnamestart Veith\surnameend} \&
  \bibinfo{editor}{Andrei \surnamestart Voronkov\surnameend}, editors:
  {\slshape \bibinfo{booktitle}{Logic for Programming, Artificial Intelligence,
  and Reasoning}}, \bibinfo{publisher}{Springer Berlin Heidelberg},
  \bibinfo{address}{Berlin, Heidelberg}, pp. \bibinfo{pages}{590--604},
  \doi{10.1007/978-3-540-89439-1_41}.

\bibitemdeclare{inproceedings}{Clarkson2014}
\bibitem{Clarkson2014}
\bibinfo{author}{Michael~R. \surnamestart Clarkson\surnameend},
  \bibinfo{author}{Bernd \surnamestart Finkbeiner\surnameend},
  \bibinfo{author}{Masoud \surnamestart Koleini\surnameend},
  \bibinfo{author}{Kristopher~K. \surnamestart Micinski\surnameend},
  \bibinfo{author}{Markus~N. \surnamestart Rabe\surnameend} \&
  \bibinfo{author}{C{\'{e}}sar \surnamestart S{\'{a}}nchez\surnameend}
  (\bibinfo{year}{2014}): \emph{\bibinfo{title}{Temporal Logics for
  Hyperproperties}}.
\newblock In \bibinfo{editor}{Mart{\'{\i}}n \surnamestart Abadi\surnameend} \&
  \bibinfo{editor}{Steve \surnamestart Kremer\surnameend}, editors: {\slshape
  \bibinfo{booktitle}{Principles of Security and Trust - Third International
  Conference, {POST} 2014, Held as Part of the European Joint Conferences on
  Theory and Practice of Software, {ETAPS} 2014, Grenoble, France, April 5-13,
  2014, Proceedings}}, {\slshape \bibinfo{series}{Lecture Notes in Computer
  Science}} \bibinfo{volume}{8414}, \bibinfo{publisher}{Springer}, pp.
  \bibinfo{pages}{265--284}, \doi{10.1007/978-3-642-54792-8\_15}.

\bibitemdeclare{inproceedings}{Cugola_Margara_2010}
\bibitem{Cugola_Margara_2010}
\bibinfo{author}{Gianpaolo \surnamestart Cugola\surnameend} \&
  \bibinfo{author}{Alessandro \surnamestart Margara\surnameend}
  (\bibinfo{year}{2010}): \emph{\bibinfo{title}{TESLA: a formally defined event
  specification language}}.
\newblock In: {\slshape \bibinfo{booktitle}{DEBS ’10}},
  \bibinfo{publisher}{ACM Press}, p.~\bibinfo{pages}{50},
  \doi{10.1145/1827418.1827427}.

\bibitemdeclare{inproceedings}{Angelo2005}
\bibitem{Angelo2005}
\bibinfo{author}{Ben \surnamestart D'Angelo\surnameend},
  \bibinfo{author}{Sriram \surnamestart Sankaranarayanan\surnameend},
  \bibinfo{author}{Cesar \surnamestart Sanchez\surnameend},
  \bibinfo{author}{Will \surnamestart Robinson\surnameend},
  \bibinfo{author}{Bernd \surnamestart Finkbeiner\surnameend},
  \bibinfo{author}{Henny~B. \surnamestart Sipma\surnameend},
  \bibinfo{author}{Sandeep \surnamestart Mehrotra\surnameend} \&
  \bibinfo{author}{Zohar \surnamestart Manna\surnameend}
  (\bibinfo{year}{2005}): \emph{\bibinfo{title}{LOLA: Runtime Monitoring of
  Synchronous Systems}}.
\newblock In: {\slshape \bibinfo{booktitle}{Proceedings of the 12th
  International Symposium on Temporal Representation and Reasoning}},
  \bibinfo{series}{TIME '05}, \bibinfo{publisher}{IEEE Computer Society},
  \bibinfo{address}{USA}, p. \bibinfo{pages}{166–174},
  \doi{10.1109/TIME.2005.26}.

\bibitemdeclare{inproceedings}{Dohr2018}
\bibitem{Dohr2018}
\bibinfo{author}{Andreas \surnamestart Dohr\surnameend},
  \bibinfo{author}{Christiane \surnamestart Engels\surnameend} \&
  \bibinfo{author}{Andreas \surnamestart Behrend\surnameend}
  (\bibinfo{year}{2018}): \emph{\bibinfo{title}{{Algebraic Operators for
  Processing Sets of Temporal Intervals in Relational Databases}}}.
\newblock In \bibinfo{editor}{Natasha \surnamestart Alechina\surnameend},
  \bibinfo{editor}{Kjetil \surnamestart N{\o}rv{\aa}g\surnameend} \&
  \bibinfo{editor}{Wojciech \surnamestart Penczek\surnameend}, editors:
  {\slshape \bibinfo{booktitle}{25th International Symposium on Temporal
  Representation and Reasoning (TIME 2018)}}, {\slshape
  \bibinfo{series}{Leibniz International Proceedings in Informatics (LIPIcs)}}
  \bibinfo{volume}{120}, \bibinfo{publisher}{Schloss Dagstuhl--Leibniz-Zentrum
  fuer Informatik}, \bibinfo{address}{Dagstuhl, Germany}, pp.
  \bibinfo{pages}{11:1--11:16}, \doi{10.4230/LIPIcs.TIME.2018.11}.

\bibitemdeclare{inproceedings}{Gabbay_1980}
\bibitem{Gabbay_1980}
\bibinfo{author}{Dov \surnamestart Gabbay\surnameend}, \bibinfo{author}{Amir
  \surnamestart Pnueli\surnameend}, \bibinfo{author}{Saharon \surnamestart
  Shelah\surnameend} \& \bibinfo{author}{Jonathan \surnamestart
  Stavi\surnameend} (\bibinfo{year}{1980}): \emph{\bibinfo{title}{On the
  Temporal Analysis of Fairness}}.
\newblock In: {\slshape \bibinfo{booktitle}{Proceedings of the 7th ACM
  SIGPLAN-SIGACT Symposium on Principles of Programming Languages}},
  \bibinfo{series}{POPL '80}, \bibinfo{publisher}{Association for Computing
  Machinery}, \bibinfo{address}{New York, NY, USA}, p.
  \bibinfo{pages}{163–173}, \doi{10.1145/567446.567462}.

\bibitemdeclare{article}{halpern1991}
\bibitem{halpern1991}
\bibinfo{author}{Joseph~Y. \surnamestart Halpern\surnameend} \&
  \bibinfo{author}{Yoav \surnamestart Shoham\surnameend}
  (\bibinfo{year}{1991}): \emph{\bibinfo{title}{A propositional modal logic of
  time intervals}}.
\newblock {\slshape \bibinfo{journal}{Journal of the ACM}}
  \bibinfo{volume}{38}(\bibinfo{number}{4}), p. \bibinfo{pages}{935–962},
  \doi{10.1145/115234.115351}.

\bibitemdeclare{phdthesis}{Kamp1968}
\bibitem{Kamp1968}
\bibinfo{author}{Johan Anthony~Wilem \surnamestart Kamp\surnameend}
  (\bibinfo{year}{1968}): \emph{\bibinfo{title}{Tense Logic and the Theory of
  Linear Order}}.
\newblock Ph.D. thesis, \bibinfo{school}{University of California, Los
  Angeles}.

\bibitemdeclare{article}{kowalski86}
\bibitem{kowalski86}
\bibinfo{author}{Robert \surnamestart Kowalski\surnameend} \&
  \bibinfo{author}{Marek \surnamestart Sergot\surnameend}
  (\bibinfo{year}{1986}): \emph{\bibinfo{title}{A logic-based calculus of
  events}}.
\newblock {\slshape \bibinfo{journal}{New Generation Computing}}
  \bibinfo{volume}{4}(\bibinfo{number}{1}), pp. \bibinfo{pages}{67--95},
  \doi{10.1007/BF03037383}.

\bibitemdeclare{inproceedings}{Krebs18}
\bibitem{Krebs18}
\bibinfo{author}{Andreas \surnamestart Krebs\surnameend}, \bibinfo{author}{Arne
  \surnamestart Meier\surnameend}, \bibinfo{author}{Jonni \surnamestart
  Virtema\surnameend} \& \bibinfo{author}{Martin \surnamestart
  Zimmermann\surnameend} (\bibinfo{year}{2018}): \emph{\bibinfo{title}{Team
  Semantics for the Specification and Verification of Hyperproperties}}.
\newblock In \bibinfo{editor}{Igor \surnamestart Potapov\surnameend},
  \bibinfo{editor}{Paul~G. \surnamestart Spirakis\surnameend} \&
  \bibinfo{editor}{James \surnamestart Worrell\surnameend}, editors: {\slshape
  \bibinfo{booktitle}{43rd International Symposium on Mathematical Foundations
  of Computer Science, {MFCS} 2018, August 27-31, 2018, Liverpool, {UK}}},
  {\slshape \bibinfo{series}{LIPIcs}} \bibinfo{volume}{117},
  \bibinfo{publisher}{Schloss Dagstuhl - Leibniz-Zentrum f{\"{u}}r Informatik},
  pp. \bibinfo{pages}{10:1--10:16}, \doi{10.4230/LIPIcs.MFCS.2018.10}.

\bibitemdeclare{inproceedings}{Lodaya_2000}
\bibitem{Lodaya_2000}
\bibinfo{author}{Kamal \surnamestart Lodaya\surnameend} (\bibinfo{year}{2000}):
  \emph{\bibinfo{title}{Sharpening the Undecidability of Interval Temporal
  Logic}}.
\newblock In \bibinfo{editor}{Jifeng \surnamestart He\surnameend} \&
  \bibinfo{editor}{Masahiko \surnamestart Sato\surnameend}, editors: {\slshape
  \bibinfo{booktitle}{Advances in Computing Science - {ASIAN} 2000, 6th Asian
  Computing Science Conference, Penang, Malaysia, November 25-27, 2000,
  Proceedings}}, {\slshape \bibinfo{series}{Lecture Notes in Computer Science}}
  \bibinfo{volume}{1961}, \bibinfo{publisher}{Springer}, pp.
  \bibinfo{pages}{290--298}, \doi{10.1007/3-540-44464-5\_21}.

\bibitemdeclare{inproceedings}{Mascle_2020}
\bibitem{Mascle_2020}
\bibinfo{author}{Corto \surnamestart Mascle\surnameend},
  \bibinfo{author}{Daniel \surnamestart Neider\surnameend},
  \bibinfo{author}{Maximilian \surnamestart Schwenger\surnameend},
  \bibinfo{author}{Paulo \surnamestart Tabuada\surnameend},
  \bibinfo{author}{Alexander \surnamestart Weinert\surnameend} \&
  \bibinfo{author}{Martin \surnamestart Zimmermann\surnameend}
  (\bibinfo{year}{2020}): \emph{\bibinfo{title}{From LTL to RLTL Monitoring:
  Improved Monitorability through Robust Semantics}}.
\newblock In: {\slshape \bibinfo{booktitle}{Proceedings of the 23rd
  International Conference on Hybrid Systems: Computation and Control}},
  \bibinfo{series}{HSCC '20}, \bibinfo{publisher}{Association for Computing
  Machinery}, \bibinfo{address}{New York, NY, USA}, pp.
  \bibinfo{pages}{170--204}, \doi{10.1145/3365365.3382197}.

\bibitemdeclare{inproceedings}{pitsikalis22a}
\bibitem{pitsikalis22a}
\bibinfo{author}{Manolis \surnamestart Pitsikalis\surnameend},
  \bibinfo{author}{Alexei \surnamestart Lisitsa\surnameend} \&
  \bibinfo{author}{Shan \surnamestart Luo\surnameend} (\bibinfo{year}{2021}):
  \emph{\bibinfo{title}{Representation and Processing of Instantaneous and
  Durative Temporal Phenomena}}.
\newblock In \bibinfo{editor}{Emanuele~De \surnamestart Angelis\surnameend} \&
  \bibinfo{editor}{Wim \surnamestart Vanhoof\surnameend}, editors: {\slshape
  \bibinfo{booktitle}{Logic-Based Program Synthesis and Transformation - 31st
  International Symposium, {LOPSTR} 2021, Tallinn, Estonia, September 7-8,
  2021, Proceedings}}, {\slshape \bibinfo{series}{Lecture Notes in Computer
  Science}} \bibinfo{volume}{13290}, \bibinfo{publisher}{Springer}, pp.
  \bibinfo{pages}{135--156}, \doi{10.1007/978-3-030-98869-2\_8}.

\bibitemdeclare{inproceedings}{pitsikalis22b}
\bibitem{pitsikalis22b}
\bibinfo{author}{Manolis \surnamestart Pitsikalis\surnameend},
  \bibinfo{author}{Alexei \surnamestart Lisitsa\surnameend},
  \bibinfo{author}{Patrick \surnamestart Totzke\surnameend} \&
  \bibinfo{author}{Simon \surnamestart Lee\surnameend} (\bibinfo{year}{2022}):
  \emph{\bibinfo{title}{Making Sense of Heterogeneous Maritime Data}}.
\newblock In: {\slshape \bibinfo{booktitle}{2022 23rd IEEE International
  Conference on Mobile Data Management (MDM)}}, pp. \bibinfo{pages}{401--406},
  \doi{10.1109/MDM55031.2022.00089}.

\bibitemdeclare{inproceedings}{Pnueli_1977}
\bibitem{Pnueli_1977}
\bibinfo{author}{Amir \surnamestart Pnueli\surnameend} (\bibinfo{year}{1977}):
  \emph{\bibinfo{title}{The temporal logic of programs}}.
\newblock In: {\slshape \bibinfo{booktitle}{18th Annual Symposium on
  Foundations of Computer Science (sfcs 1977)}}, \bibinfo{publisher}{IEEE},
  \bibinfo{address}{Providence, RI, USA}, p. \bibinfo{pages}{46–57},
  \doi{10.1109/SFCS.1977.32}.

\bibitemdeclare{article}{Rabinovich_2014}
\bibitem{Rabinovich_2014}
\bibinfo{author}{Alexander \surnamestart Rabinovich\surnameend}
  (\bibinfo{year}{2014}): \emph{\bibinfo{title}{{A Proof of Kamp's theorem}}}.
\newblock {\slshape \bibinfo{journal}{{Logical Methods in Computer Science}}}
  \bibinfo{volume}{{Volume 10, Issue 1}}, \doi{10.2168/LMCS-10(1:14)2014}.

\bibitemdeclare{article}{RAY2019}
\bibitem{RAY2019}
\bibinfo{author}{Cyril \surnamestart Ray\surnameend}, \bibinfo{author}{Richard
  \surnamestart Dréo\surnameend}, \bibinfo{author}{Elena \surnamestart
  Camossi\surnameend}, \bibinfo{author}{Anne-Laure \surnamestart
  Jousselme\surnameend} \& \bibinfo{author}{Clément \surnamestart
  Iphar\surnameend} (\bibinfo{year}{2019}): \emph{\bibinfo{title}{Heterogeneous
  integrated dataset for Maritime Intelligence, surveillance, and
  reconnaissance}}.
\newblock {\slshape \bibinfo{journal}{Data in Brief}} \bibinfo{volume}{25}, p.
  \bibinfo{pages}{104141}, \doi{10.1016/j.dib.2019.104141}.

\bibitemdeclare{article}{Venema_1990}
\bibitem{Venema_1990}
\bibinfo{author}{Yde \surnamestart Venema\surnameend} (\bibinfo{year}{1990}):
  \emph{\bibinfo{title}{Expressiveness and completeness of an interval tense
  logic.}}
\newblock {\slshape \bibinfo{journal}{Notre Dame Journal of Formal Logic}}
  \bibinfo{volume}{31}(\bibinfo{number}{4}), \doi{10.1305/ndjfl/1093635589}.

\bibitemdeclare{incollection}{OHRSTROM2006447}
\bibitem{OHRSTROM2006447}
\bibinfo{author}{Peter \surnamestart Øhrstrøm\surnameend} \&
  \bibinfo{author}{Per \surnamestart Hasle\surnameend} (\bibinfo{year}{2006}):
  \emph{\bibinfo{title}{Modern temporal logic: The philosophical background}}.
\newblock In \bibinfo{editor}{Dov~M. \surnamestart Gabbay\surnameend} \&
  \bibinfo{editor}{John \surnamestart Woods\surnameend}, editors: {\slshape
  \bibinfo{booktitle}{Logic and the Modalities in the Twentieth Century}},
  {\slshape \bibinfo{series}{Handbook of the History of
  Logic}}~\bibinfo{volume}{7}, \bibinfo{publisher}{North-Holland}, pp.
  \bibinfo{pages}{447--498}, \doi{10.1016/S1874-5857(06)80032-4}.

\end{thebibliography}
\end{document}